\newcommand{\lyxaddress}[1]{
\par {\raggedright #1
\vspace{1.4em}
\noindent\par}
}
\theoremstyle{plain}
\newtheorem{thm}{\protect\theoremname}
  \theoremstyle{plain}
  \newtheorem{prop}[thm]{\protect\propositionname}
  \theoremstyle{plain}
  \newtheorem{lem}[thm]{\protect\lemmaname}
  \theoremstyle{plain}
  \newtheorem{cor}[thm]{\protect\corollaryname}
  \theoremstyle{remark}
  \newtheorem{rem}[thm]{\protect\remarkname}
  \providecommand{\corollaryname}{Corollary}
  \providecommand{\lemmaname}{Lemma}
  \providecommand{\propositionname}{Proposition}
  \providecommand{\remarkname}{Remark}
\providecommand{\theoremname}{Theorem}
\begin{document}

\title{A note on one of the Markov chain Monte Carlo novice's questions.}

\author{Christophe Andrieu}

\maketitle

\lyxaddress{School of Mathematics, University of Bristol, UK.}
\begin{abstract}
We introduce a novel time-homogeneous Markov embedding of a class
of time inhomogeneous Markov chains widely used in the context of
Monte Carlo sampling algorithms which allows us to answer one of the
most basic, yet hard, question about the practical implementation
of these techniques. We also show that this embedding sheds some light
on the recent result of \cite{maire-douc-olsson2013}. We discuss
further applications of the technique.
\end{abstract}
\textbf{\textit{\textcolor{black}{Keywords}}}: Markov chain Monte
Carlo, Metropolis within Gibbs, Peskun order, deterministic scan,
random scan.

\section{Introduction}

It is often said that little time is needed for a novice Markov chain
Monte Carlo user to embarrass an expert with apparently simple questions.
One such question is the following. Let $\pi$ be a probability distribution
defined on some measurable space $\bigl(\mathsf{X},\mathcal{X}\bigr)$
and for some $k\in\mathbb{N}^{*}$ let $\mathfrak{P}:=\left\{ \Pi_{i}:\mathsf{X}\times\mathcal{X}\rightarrow[0,1],i=1,\ldots,k\right\} $
be a family of Markov transition kernels assumed to be reversible
with respect to $\pi$. Markov chain Monte Carlo methods consist of
using these Markov transitions in order to simulate realizations of
a Markovian process $\left\{ X_{i},i\geq0\right\} $ which may be
used to approximate expectations of functions $f:\mathsf{X}\rightarrow\mathbb{R}$
with respect to $\pi$, $\mathbb{E}_{\pi}\left(f(X)\right)$ for $X\sim\pi$,
with estimators of the form 
\[
S_{M}\big(f\big)=\frac{1}{M}\sum_{i=0}^{M-1}f\bigl(X_{i}\bigr)\;.
\]
A natural question is how to best use $\mathfrak{P}$ in order to
minimize the variability of this estimator? Traditionally there are
essentially two approaches to construct such Markov chains from $\mathfrak{P}$.
The first one consists of considering the homogeneous Markov chain
with transition defined as a mixture of the transitions in $\mathfrak{P}$
\begin{equation}
P^{{\rm rand}}=\frac{1}{k}\sum_{j=1}^{k}\Pi_{j}\;,\label{eq:randomscan}
\end{equation}
which corresponds to choosing one of the kernels at random at each
iteration. The second option consists of cycling through $\mathfrak{P}$
in a deterministic fashion, which defines a non-homogeneous Markov
chain. More precisely, for $k\in\mathbb{N}_{*}$ define the forward
circular permutation $\sigma:\left\{ 1,\ldots,k\right\} \rightarrow\left\{ 1,\ldots,k\right\} $
such that $\sigma\bigl(j\bigr)=j+1$ for $j\in\left\{ 1,\ldots,k-1\right\} $
and $\sigma\bigl(k\bigr)=1$, and its powers, starting with $\sigma^{0}(j)=j$
for $j\in\{1,\ldots,k\}$ and $\sigma^{i}=\sigma^{i-1}\circ\sigma$
for $i\geq1$. Then we introduce the sequence of Markov transition
probabilities $P^{{\rm strat}}=\left\{ P_{i}^{{\rm strat}}=\Pi_{\sigma^{i-1}(1)},i\geq1\right\} $.
For simplicity in the two scenarios outlined above we use the same
notation $\mathbb{P}_{\mu}(\cdot)$ (resp. $\mathbb{E}_{\mu}(\cdot)$)
for the probability distribution (resp. expectation) of the two Markov
chains such that $X_{0}\sim\mu$ where $\mu$ is a probability distribution
on $\big(\mathsf{X},\mathcal{X}\big)$. For example, for $i\geq1$,
with $\mathcal{F}_{i}:=\sigma\big(X_{0},X_{1},\ldots,X_{i}\big)$
and $A\in\mathcal{X}$

\[
\mathbb{P}_{\mu}\bigl(X_{i}\in A\mid\mathcal{F}_{i-1}\bigr)=\begin{cases}
P^{{\rm rand}}\big(X_{i-1},A\big)\\
P_{i}^{{\rm strat}}\big(X_{i-1},A\big)
\end{cases}\quad,
\]
depending on the scenario considered. There shall not be room for
confusion in what follows. Similarly, letting ${\rm var_{\mu}(\cdot)}$
be the variance operator corresponding to $\mathbb{E}_{\mu}(\cdot)$,
we define for $P^{\star}=\big\{ P^{{\rm rand}}\big\}$ or $P^{\star}=P^{{\rm strat}}$
and $f:\mathsf{X}\rightarrow\mathbb{R}$, the asymptotic variance
\[
\mathrm{var}(f,P^{\star})=\lim_{M\rightarrow\infty}\mathrm{var}_{\pi}\left(M^{1/2}S_{M}\big(f\big)\right)\;,
\]
when the limit exists. The novice's question we are interested in
here is which of the two schemes one should use in order to minimize
the asymptotic variance of the estimator $S_{M}\big(f\big)$? Despite
a long interest in ordering Markov chain Monte Carlo methods in terms
of such performance measure \cite{HASTINGS01041970,peskun,caracciolo-pelissetto-sokal,tierney-note,maire-douc-olsson2013},
the novice's question is, to the best of our knowledge, still unanswered
and surprisingly hard; we provide here only a partial answer as we
show that for $k=2$ it is always preferable to use the deterministic
update; see Theorem \ref{thm:mainresult} for a precise statement.
We note that if the measure of performance is the time for convergence
to equilibrium then no general conclusions can be made, see \cite{roberts1997updating}
and the references therein. Indeed, in tractable scenarios involving
the so-called Gibbs sampler, it can be established that neither of
the schemes dominates the other uniformly and that the dependence
structure of the targeted distribution $\pi$ determines this ordering.

The main idea of our proof consists of embedding the inhomogeneous
Markov chain induced by the sequence $P^{{\rm strat}}$ into an homogeneous
Markov chain, of transition $T$ defined on an extended space, and
rewriting the asymptotic variance of the inhomogeneous chain in terms
of the resolvent of the operator $T$ corresponding to this embedding
chain (Eq. (\ref{eq:generalisedresolventexpressionvariance})). This
leads to a generalization of a well known and simple identity for
the asymptotic variance of homogeneous Markov chains. It turns out
that in the case $k=2$ the homogeneous Markov chain defined by $P^{{\rm rand}}$
can be seen as being the self-adjoint part of the operator $T$. This
together with another variational representation of the asymptotic
variance of Markov chains in terms of their self-adjoint and skew
symmetric parts allows us to answer the novice's question when $k=2$
(Theorem \ref{thm:mainresult}). Along the way we also show that our
approach provides us with a very short proof of the very recent and
important result of \cite{maire-douc-olsson2013} which allows one
to compare performance of certain inhomogeneous Markov chains in the
case $k=2$. Our proof sheds some light on the developments in \cite{maire-douc-olsson2013}
and illustrates the difficulty encountered when trying to establish
the result for $k\geq3$ (Theorem \ref{thm:maire:douc:olsson}).

\section{Homogeneous embedding\label{sec:Homogeneous-embedding}}

The proofs of our results are based on classical Hilbert space techniques.
We recall here related definitions which will be useful throughout.
For any probability measure $\mu$ on $(\mathsf{E,\mathcal{E})}$
a family of Markov kernels $\left\{ \varPi_{i}:\mathsf{E}\times\mathcal{E}\rightarrow[0,1],i=1,\ldots,k\right\} $
and any function $f:\mathsf{E}\rightarrow\mathbb{R}$ let, whenever
the integrals are well-defined, 
\[
\mu\bigl(f\bigr)=\int f(x)\mu({\rm d}x)\qquad\text{and}\qquad\varPi_{q}f(x)=\int f(y)\varPi_{q}\bigl(x,{\rm d}y\bigr)\;,
\]
and for $k\geq2$, by induction, 
\[
\varPi_{\sigma^{0:k}(q)}f(x)=\int\varPi_{\sigma^{0:k-1}(q)}\bigl(x,{\rm d}y\bigr)\varPi_{\sigma^{k}(q)}f(y)\;.
\]
Consider next the spaces of square integrable (and \foreignlanguage{british}{centred})
functions defined respectively as 
\[
L^{2}(\mathsf{E},\mu)=\left\{ f:\mathsf{E}\rightarrow\mathbb{R\,}:\,\mu(f^{2})<\infty\right\} \qquad\text{and\qquad}L_{0}^{2}(\mathsf{E},\mu)=\big\{ f\in L^{2}(\mathsf{E},\mu)\,:\,\mu(f)=0\big\}\;,
\]
endowed with the inner product defined for any $f,g\in L^{2}(\mathsf{E},\mu)$
as $\left\langle f,g\right\rangle _{\mu}=\int f(x)g(x)\mu({\rm d}x)$,
and the associated norm $\|f\|_{\mu}=\sqrt{\langle f,f\bigr\rangle_{\mu}}$.
For $\lambda\in(0,1)$ and $f\in L^{2}(\mathsf{E},\mu)$ we introduce
the quantity 
\begin{align}
{\rm var}_{\lambda}\bigl(f,P^{{\rm strat}}\bigr) & =\|f-\mu\bigl(f\bigr)\|_{\mu}^{2}+\frac{2}{k}\sum_{q=1}^{k}\sum_{s=1}^{\infty}\lambda^{s}\left\langle f-\mu\bigl(f\bigr),\varPi_{\sigma^{0:s-1}(q)}f-\mu\bigl(f\bigr)\right\rangle _{\mu}\quad,\label{eq:asymptoticgeneralvariancestrat}
\end{align}
which, with an abuse of language, we may refer to as the asymptotic
variance. This quantity is well defined as for $f\in L^{2}(\mathsf{E},\mu)$
$\bigl|\bigl\langle f,\varPi_{\sigma^{0:s-1}(p)}f\bigr\rangle_{\mu}\bigr|\leq\|f\|_{\mu}^{2}<\infty$
for $s\in\mathbb{N}$ while the limit as $\lambda\uparrow1$ may or
may not exist. We first establish an expression for the asymptotic
variance of $M^{1/2}S_{M}(f)$, under minimal conditions, which can
be informally thought of as $\lim_{\lambda\uparrow1}{\rm var}_{\lambda}\bigl(f,P^{{\rm strat}}\bigr)$.
A similar expression was obtained in \cite{greenwood1998information}
for the Gibbs sampler and for $k=2$ in \cite{maire-douc-olsson2013}
under a slightly stronger assumption. 
\begin{prop}
\label{prop:expressioncycleasymptvar}Let $f\in L_{0}^{2}(\mathsf{X},\pi)$
and assume that for $q\in\{1,\ldots,k\}$ 
\[
\sum_{s=1}^{\infty}\left\langle f,\Pi_{\sigma^{0:s-1}(q)}f\right\rangle _{\pi}
\]
exists. Then for the inhomogeneous chain defined by $P^{{\rm strat}}$
\[
\lim_{M\rightarrow\infty}{\rm var}_{\pi}\left(M^{1/2}S_{M}\big(f\big)\right)=\|f\|_{\pi}^{2}+\frac{2}{k}\sum_{q=1}^{k}\sum_{s=1}^{\infty}\left\langle f,\Pi_{\sigma^{0:s-1}(q)}f\right\rangle _{\pi}\quad.
\]

\end{prop}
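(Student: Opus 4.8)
The plan is to write the finite-$M$ variance in closed form and then pass to the limit by a Cesàro argument. Since $\pi$ is invariant for each $\Pi_{j}$ it is invariant for the whole cycle, so $X_{i}\sim\pi$ for every $i\geq0$ whenever $X_{0}\sim\pi$; in particular ${\rm var}_{\pi}\bigl(f(X_{i})\bigr)=\|f\|_{\pi}^{2}$ (using $f\in L_{0}^{2}(\mathsf{X},\pi)$), and Cauchy--Schwarz makes every product $f(X_{i})f(X_{j})$ $\pi$-integrable. Expanding the variance of a sum,
\[
{\rm var}_{\pi}\left(M^{1/2}S_{M}\big(f\big)\right)=\|f\|_{\pi}^{2}+\frac{2}{M}\sum_{0\leq i<j\leq M-1}{\rm cov}_{\pi}\bigl(f(X_{i}),f(X_{j})\bigr)\;.
\]
First I would identify the covariances. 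By the (time-inhomogeneous) Markov property, $\mathbb{E}_{\pi}\bigl[f(X_{j})\mid\mathcal{F}_{i}\bigr]=P_{i+1}^{{\rm strat}}\cdots P_{j}^{{\rm strat}}f(X_{i})$; writing $s=j-i$, $q=\sigma^{i}(1)$ and using $P_{i+m}^{{\rm strat}}=\Pi_{\sigma^{i+m-1}(1)}=\Pi_{\sigma^{m-1}(q)}$, this equals $\Pi_{\sigma^{0:s-1}(q)}f(X_{i})$. Since $X_{i}\sim\pi$ and $\pi(f)=0$, it follows that ${\rm cov}_{\pi}\bigl(f(X_{i}),f(X_{i+s})\bigr)=\langle f,\Pi_{\sigma^{0:s-1}(q)}f\rangle_{\pi}$, a quantity that depends on $i$ only through $q=\sigma^{i}(1)$, i.e.\ only through the residue of $i$ modulo $k$.

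Grouping the outer index by this residue, the double sum above rewrites as
\[
\frac{2}{M}\sum_{q=1}^{k}\ \sum_{\substack{0\leq i\leq M-1\\ i\equiv q-1\ (\mathrm{mod}\ k)}}\ \sum_{s=1}^{M-1-i}\langle f,\Pi_{\sigma^{0:s-1}(q)}f\rangle_{\pi}\;.
\]
For fixed $q$ set $b_{N}^{(q)}:=\sum_{s=1}^{N}\langle f,\Pi_{\sigma^{0:s-1}(q)}f\rangle_{\pi}$, which by hypothesis converges to $A^{(q)}:=\sum_{s=1}^{\infty}\langle f,\Pi_{\sigma^{0:s-1}(q)}f\rangle_{\pi}$ and is therefore a bounded sequence. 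Enumerating the relevant indices as $i=q-1+jk$, $0\leq j\leq J_{q}:=\lfloor (M-q)/k\rfloor$, the contribution of residue $q$ is $\frac{1}{M}\sum_{j=0}^{J_{q}}b^{(q)}_{M-q-jk}$, a deterministic weighted average of the $b_{N}^{(q)}$.

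The core of the argument is to show this average tends to $\tfrac1k A^{(q)}$. Given $\varepsilon>0$, choose $N_{0}$ with $|b_{N}^{(q)}-A^{(q)}|<\varepsilon$ for all $N\geq N_{0}$; then all but at most $\lceil N_{0}/k\rceil=O(1)$ of the indices $j$ satisfy $M-q-jk\geq N_{0}$, so the corresponding terms lie within $\varepsilon$ of $A^{(q)}$; the number of these ``good'' indices divided by $M$ tends to $1/k$, while the $O(1)$ remaining terms are bounded and hence contribute $O(1/M)\to0$. Letting $\varepsilon\downarrow0$ gives the limit $\tfrac1k A^{(q)}$ for each $q$; summing over $q\in\{1,\ldots,k\}$ and restoring the factor $2$ and the term $\|f\|_{\pi}^{2}$ yields the asserted identity. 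The step I expect to be the main obstacle is exactly this last one: the edge terms with $i$ close to $M$ must be controlled, and --- since only conditional rather than absolute convergence of the series is assumed --- one cannot interchange $\lim_{M}$ with the sums directly, which is why the Cesàro phrasing is essential rather than cosmetic. (An alternative would be to prove $\lim_{M}{\rm var}_{\pi}(M^{1/2}S_{M}(f))=\lim_{\lambda\uparrow1}{\rm var}_{\lambda}(f,P^{{\rm strat}})$ by an Abelian theorem and then read off the value from (\ref{eq:asymptoticgeneralvariancestrat}), but the direct computation above seems cleaner.)
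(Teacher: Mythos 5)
Your proposal is correct and follows essentially the same route as the paper: expand the variance, use the inhomogeneous Markov property to write each cross term as $\langle f,\Pi_{\sigma^{0:s-1}(q)}f\rangle_{\pi}$ with $q$ determined by $i$ modulo $k$, regroup the double sum by residue class, and conclude with a Ces\`aro argument applied to the partial sums for each $q$. The only difference is cosmetic: you spell out the Ces\`aro step with an explicit $\varepsilon$--$N_{0}$ argument and fold the $i=0$ term into the residue class $q=1$, whereas the paper treats it separately and invokes the Ces\`aro mean theorem without detail.
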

The proof can be found in the appendix. We now embed the inhomogeneous
Markov chain of the previous section into an homogeneous Markov chain,
which facilitates later analysis. This allows us to find a simple
expression for ${\rm var}_{\lambda}\bigl(f,P^{{\rm strat}}\bigr)$
(Lemma \ref{lem:resolventTandaVariance} and Corollary \ref{cor:asymptvarianceandT})
and our result is then a direct consequence of the standard result
in Lemma \ref{lem:inverseIminusPi-non-reversible}. Let $T:\mathsf{X}^{k}\times\mathcal{X}^{\varotimes k}\rightarrow[0,1]$
be the Markov transition probability such that
\begin{align*}
T\bigl(x^{(1)},\ldots,x^{(k)};{\rm d}y^{(1)}\times{\rm d}y^{(2)}\times\ldots\times{\rm d}y^{(k)}\bigr) & =\prod_{i=1}^{k}\Pi_{i}\bigl(x^{(i)},{\rm d}y^{(\sigma(i))}\bigr)
\end{align*}
and $\pi^{\varotimes k}\bigl({\rm d}x^{(1)}\times{\rm d}x^{(2)}\times\cdots\times{\rm d}x^{(k)}\bigr)=\prod_{i=1}^{k}\pi\bigl({\rm d}x^{(i)}\bigr)$.
We then define the associated time-homogeneous Markov chain $\bigl\{\bigl(X_{i}^{(1)},X_{i}^{(2)},\ldots,X_{i}^{(k)}\bigr),i\geq0\bigr\}$
such that $\bigl(X_{0}^{(1)},X_{0}^{(2)},\ldots,X_{0}^{(k)}\bigr)\sim\pi^{\varotimes k}$.
We note that the sequence$\bigl\{ X_{i}^{(\sigma^{i}(1))},i\geq0\bigr\}$
coincides with the non-homogeneous chain defined through $P^{{\rm strat}}$
in the introduction (a proof is provided in Lemma \ref{lem:linkinhomogeneousandT}),
and the other embedded chains correspond to the same cycle, but started
at different points of the cycle.  

As is customary we will use the same notation for the operator on
functions associated to the transition kernel $T$, that is letting
$L^{2,k}\bigl(\mathsf{X},\pi\bigr)=\bigl(L^{2}\bigl(\mathsf{X},\pi\bigr)\bigr)^{k}$
we define 
\begin{align*}
T:L^{2,k}\bigl(\mathsf{X},\pi\bigr) & \rightarrow L^{2,k}\bigl(\mathsf{X},\pi\bigr)
\end{align*}
such that for any $\varphi=\big(\varphi_{1},\varphi_{2},\ldots,\varphi_{k}\big)\in L^{2,k}\bigl(\mathsf{X},\pi\bigr)$
\[
T\varphi=\bigl(\Pi_{1}\varphi_{\sigma(1)},\Pi_{2}\varphi_{\sigma(2)},\ldots,\Pi_{i}\varphi_{\sigma(i)},\ldots,\Pi_{k}\varphi_{\sigma(k)}\bigr)\quad.
\]
We will let $T^{*}$ denote the adjoint of $T$ and endow the vector
space $L^{2,k}\bigl(\mathsf{X},\pi\bigr)$ with the inner product
defined for any $\varphi,\psi\in L^{2,k}\bigl(\mathsf{X},\pi\bigr)$
as 
\[
\left\langle \varphi,\psi\right\rangle =\sum_{i=1}^{k}\bigl\langle\varphi_{i},\psi_{i}\bigr\rangle_{\pi}\quad.
\]
The following result allows us to write (\ref{eq:asymptoticgeneralvariancestrat})
in terms of the resolvent of $T$. We use the standard convention
that $T^{0}=I$, the identity operator, the additional convention
that for $j\in\mathbb{N}$, $\Pi_{\sigma^{0:-1}(j)}=I$ and define
for $\lambda\in(0,1)$ and $\varphi\in L^{2,k}\bigl(\mathsf{X},\pi\bigr)$,
$\bigl(I-\lambda T\bigr)^{-1}\varphi=\sum_{i=0}^{\infty}\lambda^{i}T^{i}\varphi$.
\begin{lem}
\label{lem:resolventTandaVariance}Let $f\in L^{2}\bigl(\mathsf{X},\pi\bigr)$
and define $\bar{f}\in\left\{ f\right\} ^{k}\in L^{2,k}\bigl(\mathsf{X},\pi\bigr)$.
Then for any $\lambda\in(0,1)$ we have
\begin{align*}
\bigl\langle\bar{f},\bigl(I-\lambda T\bigr)^{-1}\bar{f}\bigr\rangle & =\sum_{i=0}^{\infty}\lambda^{i}\sum_{q=1}^{k}\bigl\langle f,\Pi_{\sigma^{0:i-1}(q)}f\bigr\rangle=\sum_{q=1}^{k}\sum_{i=0}^{\infty}\lambda^{i}\bigl\langle f,\Pi_{\sigma^{0:i-1}(q)}f\bigr\rangle\;.
\end{align*}
\end{lem}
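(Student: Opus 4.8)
The plan is to expand the resolvent $\bigl(I-\lambda T\bigr)^{-1}\bar{f}=\sum_{i=0}^{\infty}\lambda^{i}T^{i}\bar{f}$ and identify $T^{i}\bar{f}$ explicitly. First I would compute the action of the powers of $T$ on the constant-component vector $\bar{f}=(f,f,\ldots,f)$. Since $T\varphi=\bigl(\Pi_{1}\varphi_{\sigma(1)},\ldots,\Pi_{k}\varphi_{\sigma(k)}\bigr)$, applying $T$ once to $\bar f$ gives component $j$ equal to $\Pi_{j}f$. Iterating, I claim that the $j$-th component of $T^{i}\bar f$ is $\Pi_{j}\Pi_{\sigma(j)}\Pi_{\sigma^{2}(j)}\cdots\Pi_{\sigma^{i-1}(j)}f=\Pi_{\sigma^{0:i-1}(j)}f$, using the induction-based definition of the composed kernel $\Pi_{\sigma^{0:k}(q)}$ given earlier in the excerpt (and the convention $\Pi_{\sigma^{0:-1}(j)}=I$, so that the $i=0$ term is just $f$ in every component). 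This is the one genuine computation, and it is a straightforward induction on $i$: if the $j$-th component of $T^{i}\bar f$ is $\Pi_{\sigma^{0:i-1}(j)}f$, then the $j$-th component of $T^{i+1}\bar f=T(T^{i}\bar f)$ is $\Pi_{j}$ applied to the $\sigma(j)$-th component of $T^{i}\bar f$, namely $\Pi_{j}\Pi_{\sigma^{0:i-1}(\sigma(j))}f=\Pi_{j}\Pi_{\sigma(j)}\cdots\Pi_{\sigma^{i-1}(\sigma(j))}f=\Pi_{\sigma^{0:i}(j)}f$, which closes the induction.

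Next I would pair with $\bar f$ using the inner product $\langle\varphi,\psi\rangle=\sum_{j=1}^{k}\langle\varphi_{j},\psi_{j}\rangle_{\pi}$ on $L^{2,k}(\mathsf{X},\pi)$. This gives
\[
\bigl\langle\bar f,T^{i}\bar f\bigr\rangle=\sum_{j=1}^{k}\bigl\langle f,\Pi_{\sigma^{0:i-1}(j)}f\bigr\rangle_{\pi}\;,
\]
and hence, multiplying by $\lambda^{i}$ and summing over $i\geq0$,
\[
\bigl\langle\bar f,\bigl(I-\lambda T\bigr)^{-1}\bar f\bigr\rangle=\sum_{i=0}^{\infty}\lambda^{i}\sum_{j=1}^{k}\bigl\langle f,\Pi_{\sigma^{0:i-1}(j)}f\bigr\rangle_{\pi}\;,
\]
which is the first claimed equality. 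The second equality, interchanging the two finite/infinite sums, is the only point requiring a word of justification: for fixed $\lambda\in(0,1)$ the double series converges absolutely, since $\bigl|\langle f,\Pi_{\sigma^{0:i-1}(j)}f\rangle_{\pi}\bigr|\le\|f\|_{\pi}^{2}$ by Cauchy--Schwarz and the contractivity of each $\Pi_{q}$ on $L^{2}(\mathsf{X},\pi)$ (each $\Pi_{q}$ is $\pi$-reversible hence a contraction), so $\sum_{i,j}\lambda^{i}\bigl|\langle f,\Pi_{\sigma^{0:i-1}(j)}f\rangle_{\pi}\bigr|\le k\|f\|_{\pi}^{2}/(1-\lambda)<\infty$, and Fubini--Tonelli permits the exchange. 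The same absolute-convergence bound also shows $\sum_{i}\lambda^{i}T^{i}\bar f$ converges in $L^{2,k}(\mathsf{X},\pi)$, so the interchange of the inner product with the infinite sum defining the resolvent is legitimate.

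I do not expect any real obstacle here; the lemma is essentially a bookkeeping identity. The only place where care is needed is making the index-chasing in the induction unambiguous — in particular being consistent about whether $\sigma^{0:i-1}(j)$ denotes the list $\bigl(j,\sigma(j),\ldots,\sigma^{i-1}(j)\bigr)$ read left to right in the composition $\Pi_{j}\Pi_{\sigma(j)}\cdots\Pi_{\sigma^{i-1}(j)}$ — and in recording the two boundary conventions ($T^{0}=I$ and $\Pi_{\sigma^{0:-1}(j)}=I$) so that the $i=0$ term matches on both sides. Everything else is immediate from the definitions of $T$ as an operator and of the inner product on $L^{2,k}(\mathsf{X},\pi)$.
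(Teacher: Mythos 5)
Your proof is correct and follows essentially the same route as the paper: an induction showing $\bigl[T^{i}\bar{f}\bigr]_{j}=\Pi_{j}\Pi_{\sigma(j)}\cdots\Pi_{\sigma^{i-1}(j)}f$, pairing with $\bar{f}$ under the product inner product, and expanding the resolvent with absolute convergence justifying the interchange of sums. The only cosmetic difference is that you run the induction via $T^{i+1}=T\circ T^{i}$ specialized to $\bar{f}$, whereas the paper proves the formula for a general $\varphi\in L^{2,k}(\mathsf{X},\pi)$ using $T^{i+1}=T^{i}\circ T$; both are fine.
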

\begin{proof}
Let $\varphi\in L^{2,k}\bigl(\mathsf{X},\pi\bigr)$. We first establish
that for any $i\geq1$ we have for all $j\in\{1,\ldots,k\}$ 
\[
\bigl[T^{i}\varphi\bigr]_{j}=\Pi_{j}\Pi_{\sigma(j)}\cdots\Pi_{\sigma^{i-1}(j)}\varphi_{\sigma^{i}(j)}\;.
\]
This is clearly true for $i=1$. Assume this is true for $i\geq1$
then 
\[
\bigl[T^{i+1}\varphi\bigr]_{j}=\bigl[T^{i}\circ T\varphi\bigr]_{j}=\Pi_{j}\Pi_{\sigma(j)}\cdots\Pi_{\sigma^{i-1}(j)}\bigl[T\varphi\bigr]_{\sigma^{i}(j)}=\Pi_{j}\Pi_{\sigma(j)}\cdots\Pi_{\sigma^{i-1}(j)}\Pi{}_{\sigma^{i}(j)}\varphi_{\sigma^{i+1}(j)}\;,
\]
from which we conclude. This implies that for $i\geq1$ 
\[
\bigl\langle\bar{f},T^{i}\bar{f}\bigr\rangle=\sum_{q=1}^{k}\bigl\langle f,\Pi_{\sigma^{0:i-1}(q)}f\bigr\rangle_{\pi}\;.
\]
Now with our conventions
\[
\bigl\langle\bar{f},\bigl(I-\lambda T\bigr)^{-1}\bar{f}\bigr\rangle=\sum_{i=0}^{\infty}\lambda^{i}\sum_{q=1}^{k}\bigl\langle f,\Pi_{\sigma^{0:i-1}(q)}f\bigr\rangle_{\pi}\;,
\]
where we note that the sum is absolutely convergent.\end{proof}
\begin{cor}
\label{cor:asymptvarianceandT}For any $f\in L_{0}^{2}\bigl(\mathsf{X},\pi\bigr)$
one can rewrite the asymptotic variance as follows
\begin{equation}
{\rm var}_{\lambda}\bigl(f,P^{{\rm strat}}\bigr)=\frac{2}{k}\bigl\langle\bar{f},\bigl(I-\lambda T\bigr)^{-1}\bar{f}\bigr\rangle-\|f\|_{\pi}^{2}\;,\label{eq:generalisedresolventexpressionvariance}
\end{equation}
which generalizes the expression for the asymptotic variance in the
homogeneous case ($k=1$) in terms of the resolvent.
\end{cor}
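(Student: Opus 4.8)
The plan is to observe that the identity is simply a rearrangement of Lemma \ref{lem:resolventTandaVariance} once one separates off the zeroth-order term of the resolvent series. First I would use the hypothesis $f\in L_{0}^{2}\bigl(\mathsf{X},\pi\bigr)$, so that $\pi(f)=0$ and the definition (\ref{eq:asymptoticgeneralvariancestrat}) of ${\rm var}_{\lambda}\bigl(f,P^{{\rm strat}}\bigr)$ (with $\mu=\pi$) reduces to
\[
{\rm var}_{\lambda}\bigl(f,P^{{\rm strat}}\bigr)=\|f\|_{\pi}^{2}+\frac{2}{k}\sum_{q=1}^{k}\sum_{s=1}^{\infty}\lambda^{s}\bigl\langle f,\Pi_{\sigma^{0:s-1}(q)}f\bigr\rangle_{\pi}\;.
\]

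Next I would invoke Lemma \ref{lem:resolventTandaVariance} and split the sum over $i\geq0$ into the $i=0$ contribution and the tail $i\geq1$. By the convention $\Pi_{\sigma^{0:-1}(q)}=I$, the $i=0$ term equals $\sum_{q=1}^{k}\langle f,f\rangle_{\pi}=k\|f\|_{\pi}^{2}$, whence
\[
\bigl\langle\bar{f},\bigl(I-\lambda T\bigr)^{-1}\bar{f}\bigr\rangle=k\|f\|_{\pi}^{2}+\sum_{q=1}^{k}\sum_{i=1}^{\infty}\lambda^{i}\bigl\langle f,\Pi_{\sigma^{0:i-1}(q)}f\bigr\rangle_{\pi}\;.
\]
Multiplying by $2/k$ and subtracting $\|f\|_{\pi}^{2}$ then reproduces exactly the displayed expression for ${\rm var}_{\lambda}\bigl(f,P^{{\rm strat}}\bigr)$ above, which is (\ref{eq:generalisedresolventexpressionvariance}). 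The final remark that this reduces to the classical resolvent formula when $k=1$ is immediate: for $k=1$ the permutation $\sigma$ is the identity, $\Pi_{\sigma^{0:i-1}(1)}=\Pi_{1}^{i}$, $T=\Pi_{1}$, $\bar f=f$, and one recovers ${\rm var}_{\lambda}(f,\Pi_{1})=2\langle f,(I-\lambda\Pi_{1})^{-1}f\rangle_{\pi}-\|f\|_{\pi}^{2}$.

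There is essentially no obstacle here; the only point requiring care is the bookkeeping of the zeroth-order term and the factor $2/k$, i.e.\ making sure the index conventions ($\Pi_{\sigma^{0:-1}(q)}=I$, $T^{0}=I$) are applied consistently so that the single $\|f\|_{\pi}^{2}$ left over after subtraction matches the leading term in (\ref{eq:asymptoticgeneralvariancestrat}). Absolute convergence of all the series for $\lambda\in(0,1)$, already noted in the proof of Lemma \ref{lem:resolventTandaVariance}, justifies the rearrangement.
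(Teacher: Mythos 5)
Your proof is correct and is exactly the argument the paper intends: the corollary is stated without a separate proof precisely because it is the immediate combination of the definition (\ref{eq:asymptoticgeneralvariancestrat}) (with $\pi(f)=0$) and Lemma \ref{lem:resolventTandaVariance}, after peeling off the $i=0$ term $k\|f\|_{\pi}^{2}$ and rescaling by $2/k$. The bookkeeping of the conventions $T^{0}=I$ and $\Pi_{\sigma^{0:-1}(q)}=I$, and the absolute convergence for $\lambda\in(0,1)$, are handled exactly as in the paper.
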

We have the following general result (which can be traced back at
least to \cite[proof of Lemma 3.1]{landim2004superdiffusivity}) which
leads to a powerful variational representation of the asymptotic variance
associated to general Markov transition probabilities. A proof is
provided in the supplementary material for completeness.
\begin{lem}
\label{lem:inverseIminusPi-non-reversible}Let $\big(\mathsf{E},\mathcal{E}\big)$
be a measurable space on which we define a probability distribution
$\mu$ and a Markov transition probability $\varPi:\mathsf{E}\times\mathcal{E}\to[0,1]$,
not necessarily reversible, leaving $\mu$ invariant. Then, with $S=\big(\varPi+\varPi^{*}\big)/2$
and $A=\big(\varPi-\varPi^{*}\big)/2$ (the self-adjoint and skew
symmetric parts of $\varPi$), for any $\lambda\in(0,1)$ and $f\in L^{2}\bigl(\mathsf{E},\mu\bigr)$
\[
\bigl\langle f,\big(I-\lambda\varPi\big)^{-1}f\bigr\rangle_{\mu}=\sup_{g\in L^{2}(\mathsf{E},\mu)}2\bigl\langle f,g\bigr\rangle_{\mu}-\bigl\langle g,\big(I-\lambda S\big)g\bigr\rangle_{\mu}-\lambda^{2}\bigl\langle Ag,\big(I-\lambda S\big)^{-1}Ag\bigr\rangle_{\mu}\quad.
\]
\end{lem}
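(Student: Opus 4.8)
The plan is to read the right-hand side as the Legendre transform of a coercive quadratic form on $L^{2}(\mathsf{E},\mu)$, evaluate it by completing the square, and then identify the operator that appears with the self-adjoint part of the resolvent $(I-\lambda\varPi)^{-1}$ via an elementary Schur-complement identity.

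First I would record the boundedness facts that make every inverse below legitimate. Since $\varPi$ is a Markov kernel leaving $\mu$ invariant, Jensen's inequality together with invariance gives $\|\varPi f\|_{\mu}\le\|f\|_{\mu}$, hence $\|\varPi\|=\|\varPi^{*}\|\le1$ and therefore $\|S\|\le1$, $\|A\|\le1$, with $S$ self-adjoint and $A$ skew-adjoint ($A^{*}=-A$). For $\lambda\in(0,1)$ the Neumann series $\sum_{i\ge0}\lambda^{i}\varPi^{i}$ and $\sum_{i\ge0}\lambda^{i}S^{i}$ converge in operator norm, so $(I-\lambda\varPi)^{-1}$, $(I-\lambda\varPi^{*})^{-1}$ and $(I-\lambda S)^{-1}$ are bounded; moreover $U:=I-\lambda S$ is self-adjoint with spectrum in $[1-\lambda,1+\lambda]$, hence $U\succeq(1-\lambda)I\succ0$, and $U^{-1}$ is bounded and positive, with a bounded positive square root. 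Writing $V:=\lambda A$ one has $I-\lambda\varPi=U-V$, $I-\lambda\varPi^{*}=U+V$ and $V^{*}=-V$.

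Next I would rewrite the objective. Using $A^{*}=-A$, $\lambda^{2}\langle Ag,(I-\lambda S)^{-1}Ag\rangle_{\mu}=-\lambda^{2}\langle g,A(I-\lambda S)^{-1}Ag\rangle_{\mu}$, so the quantity maximised over $g$ equals $2\langle f,g\rangle_{\mu}-\langle g,Bg\rangle_{\mu}$ with
\[
B:=(I-\lambda S)-\lambda^{2}A(I-\lambda S)^{-1}A=U-VU^{-1}V\;.
\]
Because $(VU^{-1}V)^{*}=VU^{-1}V$, the operator $B$ is self-adjoint, and since $\langle g,Bg\rangle_{\mu}=\langle g,Ug\rangle_{\mu}+\lambda^{2}\bigl\|U^{-1/2}Ag\bigr\|_{\mu}^{2}\ge(1-\lambda)\|g\|_{\mu}^{2}$, it is coercive and hence invertible with bounded inverse. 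Completing the square,
\[
2\langle f,g\rangle_{\mu}-\langle g,Bg\rangle_{\mu}=\langle f,B^{-1}f\rangle_{\mu}-\bigl\langle g-B^{-1}f,\,B\bigl(g-B^{-1}f\bigr)\bigr\rangle_{\mu}\le\langle f,B^{-1}f\rangle_{\mu}\;,
\]
with equality at $g=B^{-1}f\in L^{2}(\mathsf{E},\mu)$, so the supremum equals $\langle f,B^{-1}f\rangle_{\mu}$.

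It then remains to show $\langle f,B^{-1}f\rangle_{\mu}=\langle f,(I-\lambda\varPi)^{-1}f\rangle_{\mu}$, for which I would prove the operator identity $B^{-1}=\tfrac12\bigl[(I-\lambda\varPi)^{-1}+(I-\lambda\varPi^{*})^{-1}\bigr]$. From $X^{-1}+Y^{-1}=X^{-1}(X+Y)Y^{-1}$ with $X=U-V$ and $Y=U+V$ one obtains $\tfrac12\bigl[(U-V)^{-1}+(U+V)^{-1}\bigr]=(U-V)^{-1}U(U+V)^{-1}$, whose inverse is $(U+V)U^{-1}(U-V)=U-VU^{-1}V=B$, the $\pm V$ contributions cancelling. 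Hence $\langle f,B^{-1}f\rangle_{\mu}=\tfrac12\langle f,(I-\lambda\varPi)^{-1}f\rangle_{\mu}+\tfrac12\langle f,(I-\lambda\varPi^{*})^{-1}f\rangle_{\mu}$, and since $(I-\lambda\varPi^{*})^{-1}$ is the adjoint of $(I-\lambda\varPi)^{-1}$ and the inner product is real-valued, the two terms coincide, giving the claim. The only steps needing care are the uniform lower bounds $U\succeq(1-\lambda)I$ and $B\succeq(1-\lambda)I$ that make the relevant inverses bounded operators — this is the sole place $\lambda<1$ is used — and the sign/adjoint bookkeeping in passing between the form involving $Ag$ and the operator $B$; the conceptual content is just the Schur-complement identity $(U-V)^{-1}U(U+V)^{-1}=(U-VU^{-1}V)^{-1}$ combined with completion of squares, as in \cite{landim2004superdiffusivity}.
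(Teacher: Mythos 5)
Your proof is correct, and it takes a genuinely different route from the paper's. The paper starts from the left-hand side: it writes $\bigl\langle f,(I-\lambda\varPi)^{-1}f\bigr\rangle_{\mu}=\bigl\langle(I-\lambda S)(I-\lambda\varPi)^{-1}f,(I-\lambda\varPi)^{-1}f\bigr\rangle_{\mu}$ (the skew part contributes nothing to the quadratic form), applies the Bellman variational representation (Lemma \ref{lem:variationalrepinverseoperator}) to the self-adjoint operator $(I-\lambda S)^{-1}$, and then performs the change of variables $g=(I-\lambda\varPi^{*})^{-1}h$ to land on the stated supremum; this has the advantage of producing the explicit maximiser $\hat{g}=(I-\lambda\varPi^{*})^{-1}(I-\lambda S)(I-\lambda\varPi)^{-1}f$ in exactly the form quoted in Corollary \ref{cor:inverseIminusPi-non-reversible} and in the discussion of the gap after Theorem \ref{thm:mainresult}. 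You instead start from the right-hand side: you identify the objective as $2\langle f,g\rangle_{\mu}-\langle g,Bg\rangle_{\mu}$ with $B=U-VU^{-1}V$ coercive and self-adjoint, evaluate the supremum by completing the square, and then verify the operator identity $B^{-1}=(U-V)^{-1}U(U+V)^{-1}=\tfrac12\bigl[(I-\lambda\varPi)^{-1}+(I-\lambda\varPi^{*})^{-1}\bigr]$, concluding because the inner product is real so the two resolvent terms agree on the diagonal. All the algebra checks out (in particular $(U+V)U^{-1}(U-V)=U-VU^{-1}V$ and the coercivity bound $B\succeq(1-\lambda)I$), and your maximiser $B^{-1}f$ coincides with the paper's $\hat{g}$ since $(U-V)^{-1}U(U+V)^{-1}=(U+V)^{-1}U(U-V)^{-1}$. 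What your route buys is a cleaner structural statement — the supremum is exactly the quadratic form of the symmetrised resolvent — and it avoids invoking the variational lemma except implicitly through completion of the square; what the paper's route buys is that the maximiser and the quantified gap $\lambda^{2}\bigl\langle A\hat{g},(I-\lambda S)^{-1}A\hat{g}\bigr\rangle_{\mu}$ fall out in the form needed downstream without any further manipulation.
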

\begin{cor}
\label{cor:inverseIminusPi-non-reversible}As a consequence
\[
\bigl\langle f,\big(I-\lambda\varPi\big)^{-1}f\bigr\rangle_{\mu}\leq\bigl\langle f,\big(I-\lambda S\big)^{-1}f\bigr\rangle_{\mu}-\lambda^{2}\bigl\langle A\hat{g},\big(I-\lambda S\big)^{-1}A\hat{g}\bigr\rangle_{\mu}\leq\bigl\langle f,\big(I-\lambda S\big)^{-1}f\bigr\rangle_{\mu}
\]
where $\hat{g}=\big(I-\lambda\varPi^{*}\big)^{-1}\big(I-\lambda S\big)\big(I-\lambda\varPi\big)^{-1}f$.
\end{cor}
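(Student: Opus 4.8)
The plan is to deduce Corollary \ref{cor:inverseIminusPi-non-reversible} directly from the variational representation in Lemma \ref{lem:inverseIminusPi-non-reversible}. The supremum over $g\in L^{2}(\mathsf{E},\mu)$ of the quadratic functional
\[
J(g)=2\bigl\langle f,g\bigr\rangle_{\mu}-\bigl\langle g,\big(I-\lambda S\big)g\bigr\rangle_{\mu}-\lambda^{2}\bigl\langle Ag,\big(I-\lambda S\big)^{-1}Ag\bigr\rangle_{\mu}
\]
is in particular bounded below by the value $J(\hat g)$ at any chosen $\hat g$. So first I would simply plug in the specific $\hat g=\big(I-\lambda\varPi^{*}\big)^{-1}\big(I-\lambda S\big)\big(I-\lambda\varPi\big)^{-1}f$ and compute $J(\hat g)$. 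The expectation is that $\hat g$ is precisely the maximizer of $J$, so that $J(\hat g)=\bigl\langle f,\big(I-\lambda\varPi\big)^{-1}f\bigr\rangle_{\mu}$; indeed the first-order stationarity condition for $J$ is $(I-\lambda S)g+\lambda^{2}A^{*}(I-\lambda S)^{-1}Ag=f$, i.e. $\bigl((I-\lambda S)-\lambda^{2}A(I-\lambda S)^{-1}A\bigr)g=f$ using $A^{*}=-A$, and a Schur-complement-type identity should show this operator equals $(I-\lambda\varPi^{*})(I-\lambda S)^{-1}(I-\lambda\varPi)$ after noting $\varPi=S+A$, $\varPi^{*}=S-A$. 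This would give the first displayed equality
\[
\bigl\langle f,\big(I-\lambda\varPi\big)^{-1}f\bigr\rangle_{\mu}=\bigl\langle f,\big(I-\lambda S\big)^{-1}f\bigr\rangle_{\mu}-\lambda^{2}\bigl\langle A\hat g,\big(I-\lambda S\big)^{-1}A\hat g\bigr\rangle_{\mu},
\]
once one also checks the cross terms telescope, using $A\hat g=A(I-\lambda S)^{-1}(I-\lambda\varPi^{*})^{-1}\cdots$ — actually it is cleaner to observe $\hat g$ can be rewritten so that $(I-\lambda S)\hat g=(I-\lambda\varPi^{*})^{-1}(I-\lambda S)^{2}(I-\lambda\varPi)^{-1}f$ is not needed; rather one substitutes and simplifies $2\langle f,\hat g\rangle-\langle \hat g,(I-\lambda S)\hat g\rangle$ using the stationarity equation to collapse it to $\langle f,\hat g\rangle$, and then re-expresses $\langle f,\hat g\rangle$ in terms of $(I-\lambda S)^{-1}$.

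Alternatively, and perhaps more transparently, I would avoid the variational route entirely and prove the first equality by a direct operator identity: write $\varPi=S+A$, so $I-\lambda\varPi=(I-\lambda S)-\lambda A=(I-\lambda S)\bigl(I-\lambda(I-\lambda S)^{-1}A\bigr)$, and similarly $I-\lambda\varPi^{*}=\bigl(I+\lambda A(I-\lambda S)^{-1}\bigr)(I-\lambda S)$ using $A^{*}=-A$. Multiplying, $(I-\lambda\varPi^{*})(I-\lambda S)^{-1}(I-\lambda\varPi)=(I-\lambda S)-\lambda^{2}A(I-\lambda S)^{-1}A$, which is exactly the operator $M$ with $M\hat g=f$, hence $\hat g=M^{-1}f=(I-\lambda\varPi)^{-1}f$ and $\langle f,(I-\lambda\varPi)^{-1}f\rangle_{\mu}=\langle f,\hat g\rangle_{\mu}=\langle M\hat g,\hat g\rangle_{\mu}=\langle(I-\lambda S)\hat g,\hat g\rangle_{\mu}-\lambda^{2}\langle A(I-\lambda S)^{-1}A\hat g,\hat g\rangle_{\mu}$. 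The second term is $+\lambda^{2}\langle A\hat g,(I-\lambda S)^{-1}A\hat g\rangle_{\mu}$ after moving one $A$ across the inner product and using $A^{*}=-A$ together with self-adjointness of $(I-\lambda S)^{-1}$ — being careful about the sign here is one small point to watch. For the first term, I would like to replace $\langle(I-\lambda S)\hat g,\hat g\rangle_{\mu}$ by $\langle f,(I-\lambda S)^{-1}f\rangle_{\mu}$; this does not follow from $M\hat g=f$ alone, so in fact the cleanest statement is to keep the first inequality as it is written, with the middle expression already containing the two terms, and verify that the middle equals $\langle f,(I-\lambda S)^{-1}f\rangle_{\mu}-\lambda^{2}\langle A\hat g,(I-\lambda S)^{-1}A\hat g\rangle_{\mu}$ by substituting the stated formula for $\hat g$ and simplifying $A\hat g$; here $A\hat g=A(I-\lambda\varPi^{*})^{-1}(I-\lambda S)(I-\lambda\varPi)^{-1}f$, and one checks $(I-\lambda S)^{-1/2}$ manipulations make the quadratic form manifestly nonnegative.

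The two inequalities then follow immediately. For the first inequality I would note that since $S$ is self-adjoint with $\|S\|\le 1$ (it is an average of two Markov operators of norm at most one), $I-\lambda S\succeq(1-\lambda)I\succ0$ for $\lambda\in(0,1)$, so $(I-\lambda S)^{-1}$ is a positive operator; hence $\bigl\langle A\hat g,(I-\lambda S)^{-1}A\hat g\bigr\rangle_{\mu}\ge 0$, and subtracting a nonnegative quantity only decreases the value — giving $\bigl\langle f,(I-\lambda\varPi)^{-1}f\bigr\rangle_{\mu}\le\bigl\langle f,(I-\lambda S)^{-1}f\bigr\rangle_{\mu}$. The main obstacle, modest as it is, is bookkeeping: getting the sign of the $A$-term right (the $-A$ from the adjoint combines with another $-A$ to yield a genuinely nonnegative form, not a nonpositive one) and making sure the formula given for $\hat g$ really is the stationary point / solves $M\hat g=f$, which amounts to the Schur-complement factorization above. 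Once that factorization is in hand everything is a one-line consequence, so I would present the factorization identity $(I-\lambda\varPi^{*})(I-\lambda S)^{-1}(I-\lambda\varPi)=(I-\lambda S)-\lambda^{2}A(I-\lambda S)^{-1}A$ as the crux and derive both the equality and the two inequalities from it.
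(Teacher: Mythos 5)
There is a genuine gap: the first relation in the corollary is an inequality, not an equality, and your central claim that plugging the maximiser $\hat g$ into the variational formula yields
\[
\bigl\langle f,(I-\lambda\varPi)^{-1}f\bigr\rangle_{\mu}=\bigl\langle f,(I-\lambda S)^{-1}f\bigr\rangle_{\mu}-\lambda^{2}\bigl\langle A\hat g,(I-\lambda S)^{-1}A\hat g\bigr\rangle_{\mu}
\]
is false in general. What evaluation at the maximiser actually gives is
\[
\bigl\langle f,(I-\lambda\varPi)^{-1}f\bigr\rangle_{\mu}=2\bigl\langle f,\hat g\bigr\rangle_{\mu}-\bigl\langle\hat g,(I-\lambda S)\hat g\bigr\rangle_{\mu}-\lambda^{2}\bigl\langle A\hat g,(I-\lambda S)^{-1}A\hat g\bigr\rangle_{\mu}\;,
\]
and the corollary follows by bounding the first two terms above by their supremum over $g$, which equals $\bigl\langle f,(I-\lambda S)^{-1}f\bigr\rangle_{\mu}$ by Lemma \ref{lem:variationalrepinverseoperator}; the slack in that step is exactly $\bigl\langle \hat g-(I-\lambda S)^{-1}f,(I-\lambda S)\bigl(\hat g-(I-\lambda S)^{-1}f\bigr)\bigr\rangle_{\mu}\geq0$, which vanishes only when $\hat g=(I-\lambda S)^{-1}f$, i.e.\ essentially only when $A$ plays no role. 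This one-line appeal to Lemma \ref{lem:variationalrepinverseoperator} is the step your write-up is missing, and it is precisely how the paper concludes at the end of the supplementary proof of Lemma \ref{lem:inverseIminusPi-non-reversible}.

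Some of your ingredients are nonetheless sound, and some are not. The factorisation $(I-\lambda\varPi^{*})(I-\lambda S)^{-1}(I-\lambda\varPi)=(I-\lambda S)-\lambda^{2}A(I-\lambda S)^{-1}A$ is correct, and it does give $M\hat g=f$ for the stated $\hat g$ (because expanding the factors in the opposite order produces the same operator $M$, so $M^{-1}=(I-\lambda\varPi^{*})^{-1}(I-\lambda S)(I-\lambda\varPi)^{-1}$). But your intermediate assertion $\hat g=M^{-1}f=(I-\lambda\varPi)^{-1}f$ is wrong: these operators differ unless $A=0$, even though the quadratic forms $\bigl\langle f,M^{-1}f\bigr\rangle_{\mu}$ and $\bigl\langle f,(I-\lambda\varPi)^{-1}f\bigr\rangle_{\mu}$ do happen to coincide. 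Moreover the identity your ``direct operator'' route produces, namely $\bigl\langle f,(I-\lambda\varPi)^{-1}f\bigr\rangle_{\mu}=\bigl\langle\hat g,(I-\lambda S)\hat g\bigr\rangle_{\mu}+\lambda^{2}\bigl\langle A\hat g,(I-\lambda S)^{-1}A\hat g\bigr\rangle_{\mu}$, carries a \emph{plus} sign in front of the antisymmetric term, so it points the wrong way and cannot by itself deliver the stated upper bound; it must still be combined with the variational inequality above. What you do have right is the second inequality: $(I-\lambda S)^{-1}$ is a positive self-adjoint operator for $\lambda\in(0,1)$, hence $\bigl\langle A\hat g,(I-\lambda S)^{-1}A\hat g\bigr\rangle_{\mu}\geq0$ and subtracting it can only decrease the bound.
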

Now we consider a direct application of this result which leads to
our main result, Theorem \ref{thm:mainresult}.
\begin{thm}
\label{thm:mainresult}Let $k=2$. For any \textup{$f\in L_{0}^{2}\bigl(\mathsf{X},\pi\bigr)$
and $\lambda\in[0,1)$} 
\[
\mathrm{var}_{\lambda}(f,P^{{\rm rand}})\geq\mathrm{var}_{\lambda}(f,P^{{\rm strat}})\;.
\]
If in addition $f\in L_{0}^{2}\bigl(\mathsf{X},\pi\bigr)$ satisfies
\begin{equation}
\sum_{q=1}^{k}\sum_{s=1}^{\infty}\bigl|\bigl\langle f,\Pi_{\sigma^{0:s-1}(q)}f\bigr\rangle_{\pi}\bigr|<\infty\quad,\label{eq:uglyassumption}
\end{equation}
then 
\[
\mathrm{var}(f,P^{{\rm rand}})\geq\mathrm{var}(f,P^{{\rm strat}})\;.
\]
\end{thm}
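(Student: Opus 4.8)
The plan is to leverage, in the case $k=2$, the coincidence between $P^{{\rm rand}}$ and the self-adjoint part of the embedding operator $T$, and then to read the first inequality off Corollary~\ref{cor:inverseIminusPi-non-reversible}. The initial step is to make that coincidence precise. Since $\sigma(1)=2$ and $\sigma(2)=1$ we have $T\varphi=\bigl(\Pi_{1}\varphi_{2},\Pi_{2}\varphi_{1}\bigr)$ for $\varphi\in L^{2,2}(\mathsf{X},\pi)$; using that $\Pi_{1}$ and $\Pi_{2}$ are $\pi$-reversible, a one-line computation gives $\langle T\varphi,\psi\rangle=\langle\varphi_{1},\Pi_{2}\psi_{2}\rangle_{\pi}+\langle\varphi_{2},\Pi_{1}\psi_{1}\rangle_{\pi}$, hence $T^{*}\psi=\bigl(\Pi_{2}\psi_{2},\Pi_{1}\psi_{1}\bigr)$ and therefore $S:=(T+T^{*})/2$ acts by $S\varphi=\bigl(P^{{\rm rand}}\varphi_{2},P^{{\rm rand}}\varphi_{1}\bigr)$. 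In other words $S$ is exactly the operator obtained from $T$ by replacing each $\Pi_{i}$ by $P^{{\rm rand}}$, so repeating verbatim the induction in the proof of Lemma~\ref{lem:resolventTandaVariance} yields $\bigl[S^{i}\bar{f}\bigr]_{j}=(P^{{\rm rand}})^{i}f$ for all $i\geq0$ and $j\in\{1,2\}$, whence $\langle\bar{f},(I-\lambda S)^{-1}\bar{f}\rangle=2\langle f,(I-\lambda P^{{\rm rand}})^{-1}f\rangle_{\pi}$ for every $\lambda\in[0,1)$.

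Next I would record that, $P^{{\rm rand}}$ being a single $\pi$-reversible kernel, the homogeneous ($k=1$) instance of~(\ref{eq:asymptoticgeneralvariancestrat}) applied to it reads, for $f\in L_{0}^{2}(\mathsf{X},\pi)$, ${\rm var}_{\lambda}\bigl(f,P^{{\rm rand}}\bigr)=\|f\|_{\pi}^{2}+2\sum_{s\geq1}\lambda^{s}\langle f,(P^{{\rm rand}})^{s}f\rangle_{\pi}=2\langle f,(I-\lambda P^{{\rm rand}})^{-1}f\rangle_{\pi}-\|f\|_{\pi}^{2}$, which by the previous paragraph equals $\langle\bar{f},(I-\lambda S)^{-1}\bar{f}\rangle-\|f\|_{\pi}^{2}$. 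On the other hand Corollary~\ref{cor:asymptvarianceandT} with $k=2$ gives ${\rm var}_{\lambda}\bigl(f,P^{{\rm strat}}\bigr)=\langle\bar{f},(I-\lambda T)^{-1}\bar{f}\rangle-\|f\|_{\pi}^{2}$. Subtracting, ${\rm var}_{\lambda}\bigl(f,P^{{\rm rand}}\bigr)-{\rm var}_{\lambda}\bigl(f,P^{{\rm strat}}\bigr)=\langle\bar{f},(I-\lambda S)^{-1}\bar{f}\rangle-\langle\bar{f},(I-\lambda T)^{-1}\bar{f}\rangle$, and this is nonnegative by Corollary~\ref{cor:inverseIminusPi-non-reversible} applied with $\varPi=T$ on $L^{2}(\mathsf{X}^{2},\pi^{\varotimes2})$ and with the test function there taken to be $\bar{f}$: indeed $T$ is a Markov kernel on $(\mathsf{X}^{2},\mathcal{X}^{\varotimes2})$ leaving $\pi^{\varotimes2}$ invariant, $\bar{f}\in L^{2}(\mathsf{X}^{2},\pi^{\varotimes2})$, and $S$ is by construction its self-adjoint part. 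This establishes the first inequality for all $\lambda\in[0,1)$.

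To obtain the second claim I would let $\lambda\uparrow1$ in the inequality just proved. Under the hypothesis~(\ref{eq:uglyassumption}) the double series in~(\ref{eq:asymptoticgeneralvariancestrat}) is dominated, uniformly in $\lambda\in[0,1]$, by the summable family in~(\ref{eq:uglyassumption}), so dominated convergence gives $\lim_{\lambda\uparrow1}{\rm var}_{\lambda}\bigl(f,P^{{\rm strat}}\bigr)=\|f\|_{\pi}^{2}+\tfrac{2}{k}\sum_{q=1}^{k}\sum_{s\geq1}\langle f,\Pi_{\sigma^{0:s-1}(q)}f\rangle_{\pi}$, and Proposition~\ref{prop:expressioncycleasymptvar} identifies the right-hand side with ${\rm var}(f,P^{{\rm strat}})$. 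For the random scan, reversibility of $P^{{\rm rand}}$ lets me invoke the spectral theorem: writing $\mu_{f}$ for the finite positive spectral measure of $P^{{\rm rand}}$ associated with $f$, one has ${\rm var}_{\lambda}\bigl(f,P^{{\rm rand}}\bigr)=\int_{[-1,1]}\tfrac{1+\lambda t}{1-\lambda t}\,\mu_{f}({\rm d}t)$, and splitting this integral over $[0,1]$ and $[-1,0)$ and applying monotone, resp.\ dominated, convergence shows that $\lim_{\lambda\uparrow1}{\rm var}_{\lambda}\bigl(f,P^{{\rm rand}}\bigr)$ exists in $[0,+\infty]$ and equals the classical spectral expression for ${\rm var}(f,P^{{\rm rand}})$. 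Passing to the limit in ${\rm var}_{\lambda}\bigl(f,P^{{\rm rand}}\bigr)\geq{\rm var}_{\lambda}\bigl(f,P^{{\rm strat}}\bigr)$ then yields ${\rm var}(f,P^{{\rm rand}})\geq{\rm var}(f,P^{{\rm strat}})$.

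The delicate point is this last limit passage, and it is precisely where the hypothesis~(\ref{eq:uglyassumption}) is forced: for $P^{{\rm rand}}$ reversibility makes $\lim_{\lambda\uparrow1}{\rm var}_{\lambda}$ coincide automatically with the genuine asymptotic variance (possibly $+\infty$), whereas the embedding operator $T$ is not self-adjoint, so for $P^{{\rm strat}}$ no such free lunch is available and one genuinely needs~(\ref{eq:uglyassumption}) both to take the Abel limit through the series and to bring Proposition~\ref{prop:expressioncycleasymptvar} to bear. By contrast the fixed-$\lambda$ inequality, once the identification of $S$ with $P^{{\rm rand}}$ is in hand, is an immediate consequence of the Hilbert-space identity in Lemma~\ref{lem:inverseIminusPi-non-reversible}; and I would note in passing that the whole strategy is special to $k=2$, since for $k\geq3$ the self-adjoint part of $T$ is no longer $P^{{\rm rand}}$ but a different average of the $\Pi_{i}$'s, which is exactly the obstruction alluded to before Theorem~\ref{thm:maire:douc:olsson}.
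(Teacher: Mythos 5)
Your proposal is correct and follows essentially the same route as the paper: identify the self-adjoint part $S=(T+T^{*})/2$ of the embedding operator with two parallel copies of $P^{{\rm rand}}$, apply Corollary \ref{cor:inverseIminusPi-non-reversible} to $T$, and translate back via Corollary \ref{cor:asymptvarianceandT}. The only (cosmetic) differences are that you verify $\bigl\langle\bar{f},(I-\lambda S)^{-1}\bar{f}\bigr\rangle=2\bigl\langle f,(I-\lambda P^{{\rm rand}})^{-1}f\bigr\rangle_{\pi}$ by iterating $S$ on $\bar{f}$ directly rather than by invoking Lemma \ref{lem:variationalrepinverseoperator} twice, and that you spell out the $\lambda\uparrow1$ limit (spectral argument for $P^{{\rm rand}}$, dominated convergence under (\ref{eq:uglyassumption}) for $P^{{\rm strat}}$) which the paper only sketches.
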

\begin{proof}
We let $S=\bigl(T+T^{*}\bigr)/2$ and $A=\bigl(T-T^{*}\bigr)/2$ be
the self-adjoint and skew symmetric parts of $T$. Notice that
\[
T\varphi=\bigl(\Pi_{1}\varphi_{2},\Pi_{2}\varphi_{1}\bigr)\mbox{\;\text{and}}\; T^{*}\varphi=\bigl(\Pi_{2}\varphi_{2},\Pi_{1}\varphi_{1}\bigr)
\]
where, if needed, the second statement can be established using Lemma
\ref{lem:properties of T}. Therefore 
\[
\frac{\big(T+T^{*}\big)}{2}\varphi=\left(\frac{\big(\Pi_{1}+\Pi_{2}\big)}{2}\varphi_{2},\frac{\big(\Pi_{1}+\Pi_{2}\big)}{2}\varphi_{1}\right)\;,
\]
which corresponds to two homogeneous chains run in parallel, each
with transition probability $P^{{\rm rand}}$. We now apply Corollary
\ref{cor:inverseIminusPi-non-reversible} to $T$ and obtain
\[
\bigl\langle\bar{f},\bigl(I-\lambda T\bigr)^{-1}\bar{f}\bigr\rangle-\|f\|_{\pi}^{2}\leq\bigl\langle\bar{f},\big(I-\lambda S\big)^{-1}\bar{f}\bigr\rangle-\|f\|_{\pi}^{2}
\]
and remark that for any $g\in L^{2,k}\bigl(\mathsf{X},\pi\bigr)$
\[
\bigl\langle\bar{f},\big(I-\lambda S\big)^{-1}\bar{f}\bigr\rangle=\sup_{g\in L^{2,k}(\mathsf{X},\pi)}2\bigl\langle\bar{f},g\bigr\rangle-\bigl\langle g,\big(I-\lambda S\big)g\bigr\rangle=2\bigl\langle f,\big(I-\lambda P^{{\rm rand}}\big)^{-1}f\bigr\rangle_{\pi}
\]
where we have used that $2\bigl\langle\bar{f},g\bigr\rangle-\bigl\langle g,\big(I-\lambda S\big)g\bigr\rangle=\sum_{i=1}^{2}\bigl\langle f,g_{i}\bigr\rangle_{\pi}-\bigl\langle g_{i},\big(I-\lambda P^{{\rm rand}}\big)g_{i}\bigr\rangle_{\pi}$
and Lemma \ref{lem:variationalrepinverseoperator} twice to establish
the two equalities. Therefore
\[
\bigl\langle\bar{f},\bigl(I-\lambda T\bigr)^{-1}\bar{f}\bigr\rangle-\|f\|_{\pi}^{2}\leq2\bigl\langle f,\big(I-\lambda P^{{\rm rand}}\big)^{-1}f\bigr\rangle_{\pi}-\|f\|_{\pi}^{2}
\]
and the first statement follows with Corollary \ref{cor:asymptvarianceandT}.
For the second statement, since $P^{{\rm rand}}$ is self-adjoint,
$\lim_{\lambda\uparrow1}\mathrm{var}_{\lambda}(f,P^{{\rm rand}})$
exists and converges to $\mathrm{var}(f,P^{{\rm rand}})$ (if finite)
and the additional summability condition allows us to conclude in
a similar fashion that $\lim_{\lambda\uparrow1}\mathrm{var}_{\lambda}(f,P^{{\rm strat}})=\mathrm{var}(f,P^{{\rm strat}})$.
\end{proof}
For $k=2$, cycling deterministically through $\mathfrak{P}$ is therefore
always better in terms of asymptotic variance than random scanning.
A related result was previously known for the Gibbs sampler \cite{greenwood1998information}
that is in the particular scenario where both $\Pi_{1}$ and $\Pi_{2}$
are projections, a property essential in order to establish that
\[
\mathrm{var}(f,P^{{\rm rand}})=2\mathrm{var}(f,P^{{\rm strat}})-{\rm var}_{\pi}\big(f\big)\;.
\]
In the present scenario from Corollary \ref{cor:inverseIminusPi-non-reversible}
one can obtain a lower bound on the gap in the inequality, $\bigl\langle A\hat{g},\big(I-\lambda S\big)^{-1}A\hat{g}\bigr\rangle$
where 
\[
\hat{g}=\big(I-\lambda T^{*}\big)^{-1}\big(I-\lambda S\big)\big(I-\lambda T\big)^{-1}\bar{f}\;.
\]

\section{A short proof of the ordering result of \cite{maire-douc-olsson2013}\label{sec:A-short-proof}}

In \cite{maire-douc-olsson2013} the authors have established that
for $k=2$ the algorithm satisfies a Peskun type result \cite{peskun,tierney-note}.

\begin{thm}[Maire, Douc and Olsson]
\label{thm:maire:douc:olsson}Let $k=2$ and consider two pairs of
$\pi-$reversible Markov transition probabilities $\mathfrak{P}=\big\{\Pi_{1},\Pi_{2}\big\}$
and $\check{\mathfrak{P}}=\big\{\check{\Pi}_{1},\check{\Pi}_{2}\big\}$.
If for any $g\in L^{2}\big(\mathsf{E},\pi\big)$ and $i\in\{1,2\}$
$\bigl\langle g,(I-\check{\Pi}_{i})g\bigr\rangle_{\pi}\leq\bigl\langle g,(I-\Pi_{i})g\bigr\rangle_{\pi}$
then for any $f\in L^{2}\big(\mathsf{X},\pi\big)$, ${\rm var}_{\lambda}\bigl(f,P^{{\rm strat}}\bigr)\leq{\rm var}_{\lambda}\bigl(f,\check{P}^{{\rm strat}}\bigr)$.
If in addition (\ref{eq:uglyassumption}) holds for $\mathfrak{P}$
and $\mathfrak{P}'$ then ${\rm var}\bigl(f,P^{{\rm strat}}\bigr)\leq{\rm var}\bigl(f,\check{P}^{{\rm strat}}\bigr)$.
\end{thm}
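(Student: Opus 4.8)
The plan is to mimic the proof of Theorem~\ref{thm:mainresult}, replacing the decomposition of $T$ into self-adjoint and skew parts by a direct monotonicity argument on the resolvent. Using the embedding, Corollary~\ref{cor:asymptvarianceandT} gives ${\rm var}_{\lambda}\bigl(f,P^{{\rm strat}}\bigr)=\tfrac{2}{k}\bigl\langle\bar f,(I-\lambda T)^{-1}\bar f\bigr\rangle-\|f\|_{\pi}^{2}$ and likewise ${\rm var}_{\lambda}\bigl(f,\check P^{{\rm strat}}\bigr)=\tfrac{2}{k}\bigl\langle\bar f,(I-\lambda\check T)^{-1}\bar f\bigr\rangle-\|f\|_{\pi}^{2}$, where $\check T$ is the homogeneous embedding built from $\check{\mathfrak P}=\{\check\Pi_1,\check\Pi_2\}$. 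So for the first statement it suffices to prove, for every $\varphi\in L^{2,k}(\mathsf X,\pi)$ and $\lambda\in(0,1)$, the operator inequality $\bigl\langle\varphi,(I-\lambda T)^{-1}\varphi\bigr\rangle\le\bigl\langle\varphi,(I-\lambda\check T)^{-1}\varphi\bigr\rangle$.

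First I would feed both $T$ and $\check T$ through the variational representation of Lemma~\ref{lem:inverseIminusPi-non-reversible}. Writing $S,A$ for the self-adjoint and skew parts of $T$ and $\check S,\check A$ for those of $\check T$, the key structural observation (already used in the proof of Theorem~\ref{thm:mainresult}) is that $S\psi=\bigl(\tfrac{\Pi_1+\Pi_2}{2}\psi_2,\tfrac{\Pi_1+\Pi_2}{2}\psi_1\bigr)$, i.e.\ $S$ acts as two copies of $P^{\rm rand}$ with a coordinate swap, and analogously $\check S$ acts as two copies of $\check P^{\rm rand}$. The Peskun-type hypothesis $\bigl\langle g,(I-\check\Pi_i)g\bigr\rangle_\pi\le\bigl\langle g,(I-\Pi_i)g\bigr\rangle_\pi$ for $i=1,2$ adds to give $\bigl\langle g,(I-\check P^{\rm rand})g\bigr\rangle_\pi\le\bigl\langle g,(I-P^{\rm rand})g\bigr\rangle_\pi$, hence $I-\lambda\check S\preceq I-\lambda S$ as quadratic forms on $L^{2,k}$; equivalently $(I-\lambda\check S)^{-1}\succeq(I-\lambda S)^{-1}$. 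The subtler point is the skew part: one checks from the explicit formulas $T\psi=(\Pi_1\psi_2,\Pi_2\psi_1)$, $T^*\psi=(\Pi_2\psi_2,\Pi_1\psi_1)$ that $2A\psi=\bigl((\Pi_1-\Pi_2)\psi_2,(\Pi_2-\Pi_1)\psi_1\bigr)$, and similarly for $\check A$. I would then argue that in the supremum of Lemma~\ref{lem:inverseIminusPi-non-reversible} applied to $\check T$, the optimiser $\check g$ is a legitimate test function for the representation of $T$, and that
\[
2\langle\varphi,\check g\rangle-\langle\check g,(I-\lambda S)\check g\rangle-\lambda^2\langle A\check g,(I-\lambda S)^{-1}A\check g\rangle\ \le\ 2\langle\varphi,\check g\rangle-\langle\check g,(I-\lambda\check S)\check g\rangle-\lambda^2\langle\check A\check g,(I-\lambda\check S)^{-1}\check A\check g\rangle,
\]
which would yield $\bigl\langle\varphi,(I-\lambda T)^{-1}\varphi\bigr\rangle\ge\bigl\langle\varphi,(I-\lambda\check T)^{-1}\varphi\bigr\rangle$ --- the wrong direction. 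This sign mismatch is exactly ``the difficulty encountered when trying to establish the result for $k\geq3$'' alluded to in the introduction, and it signals that for $k=2$ one must instead exploit a special algebraic relation between the skew part and $I-\lambda S$ that makes the last two terms combine favourably; concretely, I would look for an identity expressing $(I-\lambda T)^{-1}$ in terms of $(I-\lambda P^{\rm rand})$, $(I-\lambda\Pi_1)$, $(I-\lambda\Pi_2)$ obtained by solving the $2\times2$ block system $(I-\lambda T)\psi=\varphi$ directly: from $\psi_1-\lambda\Pi_1\psi_2=\varphi_1$ and $\psi_2-\lambda\Pi_2\psi_1=\varphi_2$ one gets $(I-\lambda^2\Pi_1\Pi_2)\psi_1=\varphi_1+\lambda\Pi_1\varphi_2$, so that with $\varphi=\bar f$ the quantity $\bigl\langle\bar f,(I-\lambda T)^{-1}\bar f\bigr\rangle$ reduces to $\bigl\langle f,(I+\lambda\Pi_1)(I-\lambda^2\Pi_1\Pi_2)^{-1}f\bigr\rangle_\pi+\bigl\langle f,(I+\lambda\Pi_2)(I-\lambda^2\Pi_2\Pi_1)^{-1}f\bigr\rangle_\pi$ after symmetrising over the two embedded chains.

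The main obstacle is therefore to turn the hypothesis $\langle g,(I-\check\Pi_i)g\rangle_\pi\le\langle g,(I-\Pi_i)g\rangle_\pi$ into the desired ordering of these symmetrised resolvent forms involving the \emph{products} $\Pi_1\Pi_2$ and $\Pi_2\Pi_1$, which are not self-adjoint. I expect the cleanest route is the one indicated by the authors' own phrasing: apply Corollary~\ref{cor:inverseIminusPi-non-reversible} to write $\bigl\langle\bar f,(I-\lambda T)^{-1}\bar f\bigr\rangle=\bigl\langle\bar f,(I-\lambda S)^{-1}\bar f\bigr\rangle-\lambda^2\bigl\langle A\hat g,(I-\lambda S)^{-1}A\hat g\bigr\rangle$ with $\hat g=(I-\lambda T^*)^{-1}(I-\lambda S)(I-\lambda T)^{-1}\bar f$, and the analogous exact identity for $\check T$, and then show term by term that the Peskun hypothesis makes both the ``main'' term and the ``correction'' term move the right way. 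Showing the main term is monotone is immediate from $(I-\lambda\check S)^{-1}\succeq(I-\lambda S)^{-1}$ together with the identification of $S$-resolvent with two copies of the $P^{\rm rand}$-resolvent; the serious work is to control the correction term $\lambda^2\langle A\hat g,(I-\lambda S)^{-1}A\hat g\rangle$ versus $\lambda^2\langle\check A\check{\hat g},(I-\lambda\check S)^{-1}\check A\check{\hat g}\rangle$, and I anticipate this requires an explicit computation using $2A\psi=((\Pi_1-\Pi_2)\psi_2,(\Pi_2-\Pi_1)\psi_1)$ and the block formula for $(I-\lambda T)^{-1}$ above, together with the elementary operator fact that $g\mapsto\langle g,(I-\check\Pi_i)g\rangle_\pi\le\langle g,(I-\Pi_i)g\rangle_\pi$ for $i=1,2$ forces $(I-\check\Pi_1)^{1/2}$, $(I-\check\Pi_2)^{1/2}$ to be ``smaller'' than their unchecked counterparts; this is where the special feature $k=2$ (only two kernels, so only one product $\Pi_1\Pi_2$ up to order) is indispensable. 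Finally, for the asymptotic-variance statement, once the inequality ${\rm var}_{\lambda}(f,P^{\rm strat})\le{\rm var}_{\lambda}(f,\check P^{\rm strat})$ holds for all $\lambda\in[0,1)$, I would let $\lambda\uparrow1$ exactly as in the proof of Theorem~\ref{thm:mainresult}: assumption~(\ref{eq:uglyassumption}) for $\mathfrak P$ and for $\check{\mathfrak P}$ ensures, via Proposition~\ref{prop:expressioncycleasymptvar} and Corollary~\ref{cor:asymptvarianceandT}, that both sides converge to the genuine asymptotic variances ${\rm var}(f,P^{\rm strat})$ and ${\rm var}(f,\check P^{\rm strat})$, and the inequality passes to the limit.
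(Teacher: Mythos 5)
Your reduction of the statement to the resolvent comparison $\bigl\langle\bar f,(I-\lambda T)^{-1}\bar f\bigr\rangle\leq\bigl\langle\bar f,(I-\lambda\check T)^{-1}\bar f\bigr\rangle$ via Corollary \ref{cor:asymptvarianceandT} is correct, and so is the concluding passage to the limit $\lambda\uparrow1$ under (\ref{eq:uglyassumption}). But the core of the argument is missing. You candidly observe that the route through Lemma \ref{lem:inverseIminusPi-non-reversible} and Corollary \ref{cor:inverseIminusPi-non-reversible} produces an inequality pointing the wrong way, and from that point on the proposal consists of things you ``would look for'' and ``anticipate'': the comparison of the correction terms $\lambda^{2}\bigl\langle A\hat g,(I-\lambda S)^{-1}A\hat g\bigr\rangle$ for the two chains is precisely the hard step, and it is never carried out. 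Two smaller inaccuracies compound this: Corollary \ref{cor:inverseIminusPi-non-reversible} is an upper bound, not the ``exact identity'' you invoke (the optimiser $\hat g$ of the full functional need not attain $\sup_{g}\,2\langle\bar f,g\rangle-\langle g,(I-\lambda S)g\rangle$), and in your block computation the factor $(I+\lambda\Pi_{1})$ must sit to the right of $(I-\lambda^{2}\Pi_{1}\Pi_{2})^{-1}$, since these operators do not commute in general.

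The mechanism the paper actually uses is of a different nature and bypasses the self-adjoint/skew decomposition entirely: one interpolates $T(\beta)=\beta T+(1-\beta)\check T=\Delta(\beta)\mathfrak{S}$ and differentiates $\delta_{\lambda,f}(\beta)=\bigl\langle\bar f,(I-\lambda T(\beta))^{-1}\bar f\bigr\rangle$ in $\beta$ (Lemma \ref{lem:derivativeofIminusT}), which gives
\[
\frac{\partial}{\partial\beta}\delta_{\lambda,f}(\beta)=\lambda\bigl\langle\big(I-\lambda\mathfrak{S}^{-1}\Delta(\beta)\big)^{-1}\bar f,\big(\Delta-\check{\Delta}\big)\big(I-\lambda\mathfrak{S}\Delta(\beta)\big)^{-1}\bar f\bigr\rangle\;.
\]
For $k=2$ the permutation $\mathfrak{S}$ is its own inverse (a swap), so the two resolvent arguments coincide and the derivative becomes a quadratic form in the semidefinite operator $\Delta-\check{\Delta}$, whose sign is exactly the coordinatewise Peskun hypothesis $\bigl\langle g,(I-\check{\Pi}_{i})g\bigr\rangle_{\pi}\leq\bigl\langle g,(I-\Pi_{i})g\bigr\rangle_{\pi}$; integrating over $\beta\in[0,1]$ yields the monotonicity of the resolvent form, hence of ${\rm var}_{\lambda}$. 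Nothing playing the role of this interpolation-and-differentiation step appears in your proposal, so as written the proof has a genuine gap at its central point; the sign obstruction you ran into is real, and the cure is to change tools rather than to push harder on the variational representation.
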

The proof of this fact is a direct consequence of Lemma \ref{lem:derivativeofIminusT}
below. In order to state this key result it is useful to rewrite the
operator $T$ as the composition of elementary operators, 
\begin{align*}
\mathfrak{S}^{\pm1},\Delta\colon L^{2,k}\bigl(\mathsf{X},\pi\bigr) & \rightarrow L^{2,k}\bigl(\mathsf{X},\pi\bigr)
\end{align*}
where $\mathfrak{S}^{\pm1}$ are the forward and backward circular
permutation operators such that for any $\varphi\in L^{2,k}\bigl(\mathsf{X},\pi\bigr)$
\begin{align*}
\mathfrak{S}^{\pm1}\varphi & =\bigl(\varphi_{\sigma^{\pm1}(1)},\varphi_{\sigma^{\pm1}(2)},\ldots,\varphi_{\sigma^{\pm1}(i)},\ldots,\varphi_{\sigma^{\pm1}(k)}\big)
\end{align*}
and $\Delta$ is the diagonal operator such that for any $\varphi\in L^{2,k}\bigl(\mathsf{X},\pi\bigr)$
\begin{align*}
\Delta\varphi & =\bigl(\Pi_{1}\varphi_{1},\ldots,\Pi_{k-1}\varphi_{k-1},\Pi_{k}\varphi_{k}\bigr)\;.
\end{align*}
Then $T=\Delta\circ\mathfrak{S}$ (see Lemma \ref{lem:properties of T}).
\begin{lem}
\label{lem:derivativeofIminusT}Let $T$ and $\check{T}$ be the embedding
operators as defined earlier and associated with $\mathfrak{P}$ and
$\check{\mathfrak{P}}$ given in Theorem \ref{thm:maire:douc:olsson},
define $T(\beta)=\beta T+(1-\beta)\check{T}$ and $\Delta\big(\beta\big)=\beta\Delta+(1-\beta)\check{\Delta}$
for $\beta\in[0,1]$. For $\lambda\in[0,1)$ and $f\in L^{2}\big(\mathsf{X},\pi\big)$
let $\delta_{\lambda,f}(\beta)=\bigl\langle\bar{f},\big(I-\lambda T(\beta)\big)^{-1}\bar{f}\bigr\rangle$,
with $\bar{f}\in\{f\}^{k}$. Then
\[
\frac{\partial}{\partial\beta}\delta_{\lambda,f}(\beta)=\lambda\bigl\langle\big(I-\mathfrak{S}^{-1}\Delta(\beta)\big)^{-1}\bar{f},\big(\Delta-\check{\Delta}\big)\big(I-\mathfrak{S}\Delta(\beta)\big)^{-1}\bar{f}\bigr\rangle\quad.
\]
\end{lem}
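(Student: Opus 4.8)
The plan is to compute the derivative of $\delta_{\lambda,f}(\beta) = \bigl\langle\bar{f},\big(I-\lambda T(\beta)\big)^{-1}\bar{f}\bigr\rangle$ directly, using the standard resolvent differentiation identity. First I would record that since $T = \Delta\circ\mathfrak{S}$ and $\check{T} = \check{\Delta}\circ\mathfrak{S}$ share the same permutation operator $\mathfrak{S}$, the convex combination satisfies $T(\beta) = \Delta(\beta)\circ\mathfrak{S}$, and hence $\tfrac{\partial}{\partial\beta}T(\beta) = (\Delta-\check{\Delta})\circ\mathfrak{S}$. This is the crucial structural point: the $\beta$-dependence enters only through the diagonal factor, so the derivative of the operator being inverted is the clean expression $-\lambda(\Delta-\check{\Delta})\mathfrak{S}$.

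Next I would apply the identity $\tfrac{\partial}{\partial\beta}\big(I-\lambda T(\beta)\big)^{-1} = \big(I-\lambda T(\beta)\big)^{-1}\big(\lambda \tfrac{\partial}{\partial\beta}T(\beta)\big)\big(I-\lambda T(\beta)\big)^{-1}$, valid for $\lambda\in[0,1)$ since $\|T(\beta)\|\leq 1$ (each $\Pi_i$ and $\check\Pi_i$ is a $\pi$-Markov operator, so $\Delta(\beta)$ and $\mathfrak{S}$ are contractions) and the Neumann series converges uniformly; I would justify interchanging $\partial/\partial\beta$ with the infinite sum $\sum_i \lambda^i T(\beta)^i$ by this uniform convergence together with term-by-term differentiability. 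Plugging in gives
\[
\frac{\partial}{\partial\beta}\delta_{\lambda,f}(\beta) = \lambda\bigl\langle \big(I-\lambda T(\beta)^*\big)^{-1}\bar{f},\,(\Delta-\check{\Delta})\mathfrak{S}\,\big(I-\lambda T(\beta)\big)^{-1}\bar{f}\bigr\rangle\;.
\]

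It then remains to massage this into the stated form. On the right-hand factor, $\mathfrak{S}\big(I-\lambda T(\beta)\big)^{-1} = \mathfrak{S}\big(I-\lambda\Delta(\beta)\mathfrak{S}\big)^{-1} = \big(I-\lambda\mathfrak{S}\Delta(\beta)\big)^{-1}\mathfrak{S}$, using the elementary fact that $\mathfrak{S}(I-BA)^{-1} = (I-AB... )$—more precisely $\mathfrak{S}(I-\lambda\Delta(\beta)\mathfrak{S})^{-1}=(I-\lambda\mathfrak{S}\Delta(\beta))^{-1}\mathfrak{S}$, which follows from $\mathfrak{S}\sum_i\lambda^i(\Delta(\beta)\mathfrak{S})^i = \sum_i\lambda^i(\mathfrak{S}\Delta(\beta))^i\mathfrak{S}$; since $\mathfrak{S}\bar f = \bar f$ (all components of $\bar f$ are equal to $f$), this yields $\big(I-\lambda\mathfrak{S}\Delta(\beta)\big)^{-1}\bar f$. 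For the left factor I would use $T(\beta)^* = (\Delta(\beta)\mathfrak{S})^* = \mathfrak{S}^*\Delta(\beta)^* = \mathfrak{S}^{-1}\Delta(\beta)$, where $\mathfrak{S}^* = \mathfrak{S}^{-1}$ because $\mathfrak{S}$ is a permutation (orthogonal) and each $\Pi_i,\check\Pi_i$ being $\pi$-reversible makes $\Delta(\beta)$ self-adjoint (invoking Lemma \ref{lem:properties of T}); then $\big(I-\lambda T(\beta)^*\big)^{-1}\bar f = \big(I-\lambda\mathfrak{S}^{-1}\Delta(\beta)\big)^{-1}\bar f$. Absorbing the $\lambda$ factors (one $\lambda$ remains out front, the other two reindex the Neumann series inside the resolvents — note the stated result writes $(I-\mathfrak{S}^{-1}\Delta(\beta))^{-1}$ without $\lambda$, which I would match by tracking the powers of $\lambda$ carefully, or reading the convention that the $\lambda$'s are distributed as in the earlier displays) gives exactly the claimed formula.

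The main obstacle I anticipate is bookkeeping rather than conceptual: getting every commutation of $\mathfrak{S}$ past a resolvent correct, confirming $\mathfrak{S}^*=\mathfrak{S}^{-1}$ and $\Delta(\beta)^*=\Delta(\beta)$ rigorously via the cited lemma, and — most delicate — making sure the powers of $\lambda$ on each resolvent and the lone prefactor $\lambda$ exactly reproduce the statement (the asymmetry, with $\lambda$ appearing explicitly only once, suggests the two inner resolvents have had a $\lambda$ rescaled into their argument, so I would double-check that $\big(I-\lambda\mathfrak{S}^{\pm1}\Delta(\beta)\big)^{-1}$ versus $\big(I-\mathfrak{S}^{\pm1}\Delta(\beta)\big)^{-1}$ discrepancy is a typographic normalization and not a genuine error). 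The analytic justification for differentiating under the sum is routine given $\lambda<1$ and uniform operator-norm bounds, so I would dispatch it in one line.
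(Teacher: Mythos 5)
Your proposal is correct and follows essentially the same route as the paper's own proof: the resolvent differentiation identity, the factorisation $T(\beta)=\Delta(\beta)\circ\mathfrak{S}$, the adjoint computation $T(\beta)^{*}=\mathfrak{S}^{-1}\Delta(\beta)$ via Lemma \ref{lem:properties of T}, and the invariance $\mathfrak{S}^{\pm1}\bar{f}=\bar{f}$; the only cosmetic difference is that you commute $\mathfrak{S}$ through the resolvent directly rather than conjugating after inserting $\mathfrak{S}^{-1}\bar{f}=\bar{f}$. Your suspicion about the missing $\lambda$'s inside the two resolvents in the displayed statement is also right — the correct expression is $\lambda\bigl\langle\big(I-\lambda\mathfrak{S}^{-1}\Delta(\beta)\big)^{-1}\bar{f},\big(\Delta-\check{\Delta}\big)\big(I-\lambda\mathfrak{S}\Delta(\beta)\big)^{-1}\bar{f}\bigr\rangle$, and the omission is a typographical slip, as the subsequent remark in the paper (which expands these resolvents with explicit powers $\lambda^{j}$) confirms.
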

\begin{proof}
We note that $T(\beta)=\Delta\big(\beta\big)\circ\mathfrak{S}$ and
have 
\begin{align*}
\frac{\partial}{\partial\beta}\delta{}_{\lambda,f}(\beta) & =\lambda\bigl\langle\bar{f},\big(I-\lambda T(\beta)\big)^{-1}\big(T-\check{T}\big)\big(I-\lambda T(\beta)\big)^{-1}\bar{f}\bigr\rangle\\
 & =\lambda\bigl\langle\bar{f},\big(I-\lambda T(\beta)\big)^{-1}\big(\Delta-\check{\Delta}\big)\mathfrak{S}\big(I-\lambda T(\beta)\big)^{-1}\mathfrak{S}^{-1}\bar{f}\bigr\rangle\\
 & =\lambda\bigl\langle\big(I-\lambda\mathfrak{S}^{-1}T(\beta)\mathfrak{S}^{-1}\big)^{-1}\bar{f},\big(\Delta-\check{\Delta}\big)\mathfrak{S}\big(I-\lambda T(\beta)\big)^{-1}\mathfrak{S}^{-1}\bar{f}\bigr\rangle\\
 & =\lambda\bigl\langle\big(I-\lambda\mathfrak{S}^{-1}T(\beta)\mathfrak{S}^{-1}\big)^{-1}\bar{f},\big(\Delta-\check{\Delta}\big)\big(I-\lambda\mathfrak{S}T(\beta)\mathfrak{S}^{-1}\big)^{-1}\bar{f}\bigr\rangle\,.
\end{align*}
The first equality follows from standard arguments (see \cite{tierney-note}
and \cite[Lemma 51]{andrieu-vihola-2012} for additional details,
noting that reversibility is in fact not required). On the second
line we have used the definition of $T$ and $\check{T}$ in terms
of $\Delta,\check{\Delta}$ and $\mathfrak{S}$, and $\mathfrak{S}^{\pm1}\bar{f}=\bar{f}$.
On the third line we have used that for $k\geq1$ the adjoint of $T(\beta)^{k}$
is $T^{*}(\beta)^{k}=\big(\mathfrak{S}^{-1}T(\beta)\mathfrak{S}^{-1}\big)^{k}$
from Lemma \ref{lem:properties of T}, from which we deduce that $\big[\big(I-\lambda T(\beta)\big)^{-1}\big]^{*}=\big(I-\lambda\mathfrak{S}^{-1}T(\beta)\mathfrak{S}^{-1}\big)^{-1}$.
On the fourth line we use the identity
\begin{multline*}
\mathfrak{S}\big(I-\lambda T(\beta)\big)\mathfrak{S}^{-1}=\mathfrak{S}\left(\sum_{k=0}^{\infty}\big(\lambda T(\beta)\big)^{k}\right)\mathfrak{S}^{-1}=\sum_{k=0}^{\infty}\big(\lambda\mathfrak{S}T(\beta)\mathfrak{S}^{-1}\big)^{k}=\big(I-\lambda\mathfrak{S}T(\beta)\mathfrak{S}^{-1}\big)^{-1}
\end{multline*}
and the result follows with $T\big(\beta\big)=\Delta\big(\beta\big)\mathfrak{S}$.
\end{proof}

\begin{proof}[Proof of Theorem \ref{thm:maire:douc:olsson}]
The derivative of $\delta_{\lambda,f}(\beta)$ is evidently positive
if the resolvent type terms $\big(I-\mathfrak{S}\Delta\big(\beta\big)\big)^{-1}\bar{f}$
and $\big(I-\mathfrak{S}^{-1}\Delta\big(\beta\big)\big)^{-1}\bar{f}$
coincide since for any $\phi\in L^{2,k}\bigl(\mathsf{X},\pi\bigr)$
$\bigl\langle\phi,\big(\Delta-\check{\Delta}\big)\phi\bigr\rangle\geq0$.
This is the case for $k=2$ since in this scenario $\mathfrak{S}^{-1}=\mathfrak{S}$
which reduces to a ``swap'' operator.\end{proof}
\begin{rem}
We may wonder whether the result of Theorem \ref{thm:maire:douc:olsson}
holds true for $k\geq3$. To that purpose assume that the Markov transitions
in $\mathfrak{P}$ and $\check{\mathfrak{P}}$ coincide, except for
the $i-$th element and notice that with $\Pi_{j}(\beta)=\big[\Delta(\beta)\big]_{j}$
\[
\left[\big(I-\mathfrak{S}\Delta\big(\beta\big)\big)^{-1}\bar{f}\right]_{i}=\sum_{j=0}^{\infty}\lambda^{j}\Pi_{\sigma^{1:j}(i)}\big(\beta\big)f
\]
and
\[
\left[\big(I-\lambda\mathfrak{S}^{-1}\Delta\big(\beta\big)\big)^{-1}\bar{f}\right]_{i}=\sum_{j=0}^{\infty}\lambda^{j}\Pi_{\sigma^{-j:-1}(i)}\big(\beta\big)f\;.
\]
Then the derivative of $\delta{}_{\lambda,f}(\beta)$ reduces to 
\[
\frac{\partial}{\partial\beta}\delta{}_{\lambda,f}(\beta)=\bigl\langle\sum_{j=0}^{\infty}\lambda^{j}\Pi_{\sigma^{-j:-1}(i)}\big(\beta\big)f,(\Pi_{i}-\check{\Pi}{}_{i})\sum_{j=0}^{\infty}\lambda^{j}\Pi_{\sigma^{1:j}(i)}\big(\beta\big)f\bigr\rangle_{\pi}
\]
and this term is positive as soon as the two sums coincide, which
is the case if $k=2p-1$ for some $p\in\mathbb{N}^{*}$, $i=p$ and
$(\Pi_{1},\Pi_{2},\ldots,\Pi_{k})=(Q_{p},Q_{p-1},\ldots,Q_{2},Q_{1},Q_{2},\ldots,Q_{p})$
for a family of transition probabilities $\big\{ Q_{i}:\mathsf{X}\times\mathcal{X}\rightarrow[0,1],i=1,\ldots,p\big\}$.
This corresponds to a known reverbilisation strategy of the Gibbs
sampler, albeit for the operator $Q_{1}Q_{2}\cdots Q_{p}$. We note
that this also holds in the case where $k=2p-2$ for some $p\in\mathbb{N}^{*}$,
the cycle $(\Pi_{1},\Pi_{2},\ldots,\Pi_{k})=(Q_{p-1},\ldots,Q_{2},Q_{1},Q_{2},\ldots,Q_{p})$
and both $i=p-1$ and $i=2p-2$.
\end{rem}

\section{Conclusions and perspectives}

We have introduced a novel time-homogeneous Markov embedding of a
class of time inhomogeneous Markov chains widely used in the context
of Monte Carlo sampling techniques. We have shown that this approach
allows one to rapidly prove new, or recently established, results
by leveraging existing techniques or known results for homogeneous
Markov chains. We suggest two possible directions for further research.
In \cite{toth:1986} the author extended the celebrated Kipnis and
Varadhan results on the central limit theorem for reversible Markov
chains to the general scenario; our approach offers the promise to
be able to extend those results to the inhomogeneous Markov chains
considered in the present paper. Another possible avenue of research
is concerned with finding bounds on the convergence to stationarity
in terms of e.g. total variation distance. For example one could attempt
to extend the results of Fill \cite[Theorem 2.1]{fill1991} which
rely on the multiplicative \foreignlanguage{british}{reverbilisation}
of non-reversible Markov chains and the assumption that $\mathsf{X}$
is a finite discrete space.

\appendix

\section{Appendix}
\begin{proof}[Proof of Proposition \ref{prop:expressioncycleasymptvar}]
Let $f\in L_{0}^{2}(\mathsf{X},\pi)$ then for $n\geq1$ 
\[
\mathbb{E}_{\pi}\bigl[{\textstyle \sum}_{i=0}^{n-1}f\bigl(X_{i}\bigr){\textstyle \sum}_{j=0}^{n-1}f\bigl(X_{j}\bigr)\bigr]=n\mathbb{E}_{\pi}\left[f^{2}(X_{0})\right]+2\sum_{0\leq i<j\leq n-1}\mathbb{E}_{\pi}\bigl[f\bigl(X_{i}\bigr)f\bigl(X_{j}\bigr)\bigr]
\]
and we focus on the second term. We rewrite it as
\begin{align*}
\sum_{0\leq i<j\leq n-1}\mathbb{E}_{\pi}\bigl[f\bigl(X_{i}\bigr)f\bigl(X_{j}\bigr)\bigr] & =\sum_{0\leq i<j\leq n-1}\mathbb{E}_{\pi}\bigl[f\bigl(X_{i}\bigr)\Pi_{\sigma^{i:j-1}(1)}f\bigl(X_{i}\bigr)\bigr]\\
 & =\sum_{0\leq i<j\leq n-1}\bigl\langle f,\Pi_{\sigma^{i:j-1}(1)}f\bigr\rangle_{\pi}\\
 & =\sum_{0\leq i<j\leq n-1}\bigl\langle f,\Pi_{\sigma^{i:j-i-1+i}(1)}f\bigr\rangle_{\pi}\\
 & =\sum_{0\leq i<n-1}\sum_{m=1}^{n-1-i}\bigl\langle f,\Pi_{\sigma^{i:m-1+i}(1)}f\bigr\rangle_{\pi}\quad.
\end{align*}
Now we have (the term $i=0$ is treated similarly, but separately)
\begin{multline*}
n^{-1}\sum_{0<i<n-1}\sum_{m=1}^{n-1-i}\bigl\langle f,\Pi_{\sigma^{i:m-1+i}(1)}f\bigr\rangle_{\pi}\\
=\sum_{q=1}^{k}\frac{\bigl\lfloor(n-2-q)/k\bigr\rfloor}{n}\frac{1}{\bigl\lfloor(n-2-q)/k\bigr\rfloor}\sum_{0<pk+q<n-1}\sum_{m=1}^{n-1-pk-q}\bigl\langle f,\Pi_{\sigma^{0:m-1}(q)}f\bigr\rangle_{\pi}\;,
\end{multline*}
and we conclude by letting $n\rightarrow\infty$ and using a Cesàro
sum argument for each $q\in\{1,\ldots,k\}$.
\end{proof}
Operator $T$ is not self-adjoint, but one can easily determine the
expression for its adjoint $T^{*}$ in terms of $\mathfrak{S}$ and
$\Delta$ or $T$ (\foreignlanguage{british}{visualising} $T$ as
a block diagonal matrix may be helpful). 
\begin{lem}
\label{lem:properties of T}We have that \end{lem}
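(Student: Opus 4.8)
The plan is to reduce every assertion to elementary computations with the inner product on $L^{2,k}\bigl(\mathsf{X},\pi\bigr)$, using only that $\mathfrak{S}$ is a relabelling of coordinates and that each $\Pi_{i}$ is self-adjoint on $L^{2}\bigl(\mathsf{X},\pi\bigr)$ by $\pi$-reversibility. First I would verify the factorisation $T=\Delta\circ\mathfrak{S}$ by unwinding the definitions: $\mathfrak{S}$ sends $\varphi=(\varphi_{1},\ldots,\varphi_{k})$ to $(\varphi_{\sigma(1)},\ldots,\varphi_{\sigma(k)})$, after which $\Delta$ multiplies the $i$-th coordinate by $\Pi_{i}$, producing $(\Pi_{1}\varphi_{\sigma(1)},\ldots,\Pi_{k}\varphi_{\sigma(k)})=T\varphi$. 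En passant, the same Fubini-type computation, carried out coordinate by coordinate on the kernel, shows that $\pi^{\varotimes k}$ is invariant for $T$, and since $\|\mathfrak{S}\|=1$ and $\|\Delta\|\leq 1$ one also gets $\|T\|\leq 1$.

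Next I would compute the adjoints of the two building blocks. Writing $\langle\mathfrak{S}\varphi,\psi\rangle=\sum_{i=1}^{k}\langle\varphi_{\sigma(i)},\psi_{i}\rangle_{\pi}$ and reindexing the sum by $j=\sigma(i)$ gives $\langle\mathfrak{S}\varphi,\psi\rangle=\langle\varphi,\mathfrak{S}^{-1}\psi\rangle$, so $\mathfrak{S}^{*}=\mathfrak{S}^{-1}$, the backward circular permutation operator; in particular $\mathfrak{S}$ is an isometric isomorphism of $L^{2,k}\bigl(\mathsf{X},\pi\bigr)$. Similarly $\langle\Delta\varphi,\psi\rangle=\sum_{i=1}^{k}\langle\Pi_{i}\varphi_{i},\psi_{i}\rangle_{\pi}=\sum_{i=1}^{k}\langle\varphi_{i},\Pi_{i}\psi_{i}\rangle_{\pi}$ by reversibility, hence $\Delta^{*}=\Delta$.

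Combining these, $T^{*}=(\Delta\circ\mathfrak{S})^{*}=\mathfrak{S}^{*}\Delta^{*}=\mathfrak{S}^{-1}\Delta$, and inserting $I=\mathfrak{S}\mathfrak{S}^{-1}$ turns this into $T^{*}=\mathfrak{S}^{-1}\bigl(\Delta\mathfrak{S}\bigr)\mathfrak{S}^{-1}=\mathfrak{S}^{-1}T\mathfrak{S}^{-1}$. The power identity is then immediate: $\bigl(T^{k}\bigr)^{*}=\bigl(T^{*}\bigr)^{k}=\bigl(\mathfrak{S}^{-1}T\mathfrak{S}^{-1}\bigr)^{k}$ for all $k\geq1$; and since $\Delta(\beta)=\beta\Delta+(1-\beta)\check{\Delta}$ is again block-diagonal, now with self-adjoint blocks $\beta\Pi_{i}+(1-\beta)\check{\Pi}_{i}$, the same argument applied to $T(\beta)=\Delta(\beta)\circ\mathfrak{S}$ gives $\bigl(T(\beta)^{k}\bigr)^{*}=\bigl(\mathfrak{S}^{-1}T(\beta)\mathfrak{S}^{-1}\bigr)^{k}$, as invoked in the proof of Lemma \ref{lem:derivativeofIminusT}. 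Finally, for $k=2$ the operator $\mathfrak{S}^{-1}=\mathfrak{S}$ is the coordinate swap, so $T^{*}\varphi=\mathfrak{S}\Delta\varphi=\mathfrak{S}\bigl(\Pi_{1}\varphi_{1},\Pi_{2}\varphi_{2}\bigr)=\bigl(\Pi_{2}\varphi_{2},\Pi_{1}\varphi_{1}\bigr)$, recovering the formula used in the proof of Theorem \ref{thm:mainresult}. There is no genuine obstacle here: the whole argument is bookkeeping with the permutation $\sigma$ and its powers, the one step meriting a moment's care being the reindexing in the computation of $\mathfrak{S}^{*}$, which is precisely what makes $\mathfrak{S}$ unitary and hence legitimises the resolvent conjugations $\mathfrak{S}(I-\lambda T)^{-1}\mathfrak{S}^{-1}=(I-\lambda\mathfrak{S}T\mathfrak{S}^{-1})^{-1}$ exploited later.
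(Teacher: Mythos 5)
Your proposal is correct and follows essentially the same route as the paper: reindex the sum defining the inner product to get $\mathfrak{S}^{*}=\mathfrak{S}^{-1}$, use $\pi$-reversibility of each $\Pi_{i}$ for $\Delta^{*}=\Delta$, and compose adjoints in the factorisation $T=\Delta\circ\mathfrak{S}$ to obtain $T^{*}=\mathfrak{S}^{-1}\Delta=\mathfrak{S}^{-1}T\mathfrak{S}^{-1}$. The extra observations (unitarity of $\mathfrak{S}$, the power identity, and the extension to $T(\beta)$) are correct and merely make explicit what the paper leaves implicit when these facts are invoked later.
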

\begin{enumerate}
\item the adjoint of $\mathfrak{S}$ is $\mathfrak{S}^{*}=\mathfrak{S}^{-1}$,
\item $\Delta^{*}=\Delta$, that is $\Delta$ is self-adjoint,
\item $T=\Delta\circ\mathfrak{S}$ and the adjoint of $T$ is $T^{*}=\mathfrak{S}^{-1}\circ\Delta=\mathfrak{S}^{-1}\circ T\circ\mathfrak{S}^{-1}$.\end{enumerate}
\begin{proof}
The first result follows from 
\begin{align*}
\bigl\langle\varphi,\mathfrak{S}\varphi\bigr\rangle & =\sum_{i=1}^{k}\bigl\langle\varphi_{i},\varphi_{\sigma(i)}\bigr\rangle_{\pi}=\sum_{j=1}^{k}\bigl\langle\varphi_{\sigma^{-1}(j)},\varphi_{j}\bigr\rangle_{\pi}=\bigl\langle\mathfrak{S}^{-1}\varphi,\varphi\bigr\rangle\;.
\end{align*}
The second result is direct and the third result follows from the
general fact that $T^{*}=\mathfrak{S}^{*}\circ\Delta^{*}$ followed
by an application of the first two results of the lemma. We conclude.
\end{proof}

\section{Supplementary material}

We let $\bar{\mathbb{P}}(\cdot)$ denote the probability distribution
of the Markov chain defined by $T$.
\begin{lem}
\label{lem:linkinhomogeneousandT}For $m\geq0$ and $A_{1},A_{2},\ldots,A_{m}\in\mathcal{X}^{m}$
we have 
\[
\bar{\mathbb{P}}\left(X_{0}^{\sigma^{0}(1)}\in A_{0},X_{1}^{\sigma^{1}(1)}\in A_{1},\ldots,X_{m}^{\sigma^{m}(1)}\in A_{m}\right)=\mathbb{P}\left(X_{0}\in A_{0},X_{1}\in A_{1},\ldots,X_{m}\in A_{m}\right)\;.
\]
\end{lem}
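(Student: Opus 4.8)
The plan is to unfold the left-hand side as an iterated integral against the product initial law $\pi^{\varotimes k}$ and $m$ copies of the kernel $T$, and then to integrate out the coordinates one time step at a time, starting from time $m$ and working backwards. The mechanism that makes this work is that $T$ acts on the $k$ coordinates in a decoupled way and that each $\Pi_{i}$ is a probability kernel; formally the argument is an induction on $m$, but the ``peeling'' computation is the transparent way to see it.

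In more detail, write $x=(x^{(1)},\ldots,x^{(k)})$ and likewise $y$, so that
\[
\bar{\mathbb{P}}\bigl(X_{0}^{\sigma^{0}(1)}\in A_{0},\ldots,X_{m}^{\sigma^{m}(1)}\in A_{m}\bigr)=\int\pi^{\varotimes k}({\rm d}x_{0})\,\mathbf{1}_{A_{0}}\bigl(x_{0}^{(1)}\bigr)\prod_{j=1}^{m}T\bigl(x_{j-1},{\rm d}x_{j}\bigr)\,\mathbf{1}_{A_{j}}\bigl(x_{j}^{(\sigma^{j}(1))}\bigr)\;.
\]
The key step, applied repeatedly, is the following: if $\psi\colon\mathsf{X}\to\mathbb{R}$ is bounded and we regard $x\mapsto\psi\bigl(x^{(\sigma^{j}(1))}\bigr)$ as a function on $\mathsf{X}^{k}$ of the single coordinate $\sigma^{j}(1)$, then since $\sigma^{j}(1)=\sigma\bigl(\sigma^{j-1}(1)\bigr)$, the definition of $T$ together with the fact that every other factor $\Pi_{i}$ integrates to one gives $\int T(x,{\rm d}y)\,\psi\bigl(y^{(\sigma^{j}(1))}\bigr)=\bigl(\Pi_{\sigma^{j-1}(1)}\psi\bigr)\bigl(x^{(\sigma^{j-1}(1))}\bigr)$, again a function of the single coordinate $\sigma^{j-1}(1)$. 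Chaining this identity from $j=m$ down to $j=1$ collapses the product of indicators into a nested expression in $x_{0}^{(1)}$ alone, and integrating $x_{0}$ against $\pi^{\varotimes k}$ leaves only the first coordinate, distributed according to $\pi$, all other coordinates contributing a factor $1$. Recalling that $P_{j}^{{\rm strat}}=\Pi_{\sigma^{j-1}(1)}$, the resulting expression is precisely $\mathbb{P}_{\pi}\bigl(X_{0}\in A_{0},\ldots,X_{m}\in A_{m}\bigr)$ for the inhomogeneous chain of the introduction with $X_{0}\sim\pi$, which is the right-hand side.

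There is no analytic subtlety here: all sums and products are finite, the integrands are bounded, and every kernel is a probability kernel, so the interchange of integrals is unproblematic. The only point that requires genuine care is the index bookkeeping --- one must verify that $\sigma^{-1}\bigl(\sigma^{j}(1)\bigr)=\sigma^{j-1}(1)$, so that peeling off the time-$j$ integral really does return a function of coordinate $\sigma^{j-1}(1)$, and that the induced shift of labels lines up exactly with the recursion $P_{i}^{{\rm strat}}=\Pi_{\sigma^{i-1}(1)}$. An equivalent route is to observe, using the Markov property of the $T$-chain and the definition of $T$, that the conditional law of $X_{j+1}^{\sigma^{j+1}(1)}$ given the states of the embedding chain up to time $j$ equals $\Pi_{\sigma^{j}(1)}\bigl(X_{j}^{\sigma^{j}(1)},\cdot\bigr)$, hence depends on the past only through $X_{j}^{\sigma^{j}(1)}$; the tracked sequence is then a time-inhomogeneous Markov chain with the stated transitions and initial law, and uniqueness of finite-dimensional distributions concludes. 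I would write out the direct peeling computation, since it is the cleanest.
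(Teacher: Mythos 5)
Your proof is correct, and it is essentially the paper's argument: the paper simply states the conditional-law version you offer as your ``equivalent route'', namely that $\bar{\mathbb{P}}\bigl(X_{i}^{\sigma^{i}(1)}\in A\mid\mathcal{G}_{i-1}\bigr)=\Pi_{\sigma^{i-1}(1)}\bigl(X_{i-1}^{\sigma^{i-1}(1)},A\bigr)$ together with $X_{0}^{(1)}\sim\pi$, while your peeling computation is just the explicit finite-dimensional-distribution verification of that same fact. The index bookkeeping you flag, $\sigma^{-1}\bigl(\sigma^{j}(1)\bigr)=\sigma^{j-1}(1)$ so that the time-$j$ integral returns $\Pi_{\sigma^{j-1}(1)}\psi$ evaluated at coordinate $\sigma^{j-1}(1)$, checks out against the definition of $T$.
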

\begin{proof}
By construction for $A\in\mathcal{X}$, $\bar{\mathbb{P}}\left(X_{0}^{\sigma^{0}(1)}\in A\right)=\pi\bigl(A\bigr)=\mathbb{P}\left(X_{0}\in A\right)$
and for $i\geq1$ component $\sigma^{i}(1)$ is generated by kernel
$\Pi_{\sigma^{i-1}(1)}$ and with 
\[
\mathcal{G}_{i}=\sigma\left(\bigl(X_{m}^{(1)},X_{m}^{(2)},\ldots,X_{m}^{(k)}\bigr),0\leq m\leq i\right)
\]
 we have 
\[
\bar{\mathbb{P}}\left(X_{i}^{\sigma^{i}(1)}\in A\mid\mathcal{G}_{i-1}\right)=\Pi_{\sigma^{i-1}(1)}\left(X_{i-1}^{\sigma^{i-1}(1)},A\right)
\]
and we conclude.
\end{proof}

\begin{proof}[Proof of Lemma \ref{lem:inverseIminusPi-non-reversible}]
We have 
\begin{align*}
\bigl\langle f,\bigl[I-\lambda\varPi\bigr]^{-1}f\bigr\rangle_{\mu} & =\bigl\langle\big(I-\lambda\varPi\big)\big(I-\lambda\varPi\big)^{-1}f,\big(I-\lambda\varPi\big)^{-1}f\bigr\rangle_{\mu}\\
 & =\bigl\langle\big(I-\lambda S\big)\big(I-\lambda\varPi\big)^{-1}f,\big(I-\lambda\varPi\big)^{-1}f\bigr\rangle_{\mu}\\
 & =\sup_{h\in L^{2}(\mathsf{E},\mu)}2\bigl\langle\big(I-\lambda\varPi\big)^{-1}f,h\bigr\rangle_{\mu}-\bigl\langle h,\big(I-\lambda S\big)^{-1}h\bigr\rangle_{\mu}\\
 & =\sup_{h\in L^{2}(\mathsf{E},\mu)}2\bigl\langle f,\big(I-\lambda\varPi^{*}\big)^{-1}h\bigr\rangle_{\mu}-\bigl\langle h,\big(I-\lambda S\big)^{-1}h\bigr\rangle_{\mu}\\
 & =\sup_{g\in L^{2}(\mathsf{E},\mu)}2\bigl\langle f,g\bigr\rangle_{\mu}-\bigl\langle\big(I-\lambda\varPi^{*}\big)g,\big(I-\lambda S\big)^{-1}\big(I-\lambda\varPi^{*}\big)g\bigr\rangle_{\mu}\\
 & =\sup_{g\in L^{2}(\mathsf{E},\mu)}2\bigl\langle f,g\bigr\rangle_{\mu}-\bigl\langle g,\big(I-\lambda S\big)g\bigr\rangle_{\mu}-\bigl\langle\lambda Ag,\big(I-\lambda S\big)^{-1}\lambda Ag\bigr\rangle_{\mu}
\end{align*}
where we have used that $\varPi=S+A$, $\varPi^{*}=S-A$, that for
any $g\in L^{2}\bigl(\mathsf{E},\mu\bigr)$, $\bigl\langle g,Ag\bigr\rangle_{\mu}=-\bigl\langle Ag,g\bigr\rangle_{\mu}=0$,
Lemma \ref{lem:variationalrepinverseoperator} for the self-adjoint
operator $\big(I-\lambda S\big)$, $\big[\big(I-\lambda\varPi\big)^{-1}\big]^{*}=\big(I-\lambda\varPi^{*}\big)^{-1}$,
set $g=\big(I-\lambda\varPi^{*}\big)^{-1}h$ and again used the property
$\bigl\langle g,Ag\bigr\rangle_{\mu}=0$. From Lemma \ref{lem:variationalrepinverseoperator}
the supremum on the third line is attained for $\hat{h}=\big(I-\lambda S\big)\big(I-\lambda\varPi\big)^{-1}f$,
which translates into $\hat{g}=\big(I-\lambda\varPi^{*}\big)^{-1}\hat{h}$
on the last line. Consequently using again Lemma \ref{lem:variationalrepinverseoperator}
for the operator $I-\lambda S$ we deduce 
\[
\bigl\langle f,\bigl[I-\lambda\varPi\bigr]^{-1}f\bigr\rangle_{\mu}\leq\bigl\langle f,\bigl[I-\lambda S\bigr]^{-1}f\bigr\rangle_{\mu}-\lambda^{2}\bigl\langle A\hat{g},\big(I-\lambda S\big)^{-1}A\hat{g}\bigr\rangle_{\mu}
\]

\end{proof}
The following provides a useful variational representation of the
quadratic form of the inverse of a positive self-adjoint operators,
attributed to Bellman, and used for example by \cite{caracciolo-pelissetto-sokal}. 
\begin{lem}
\label{lem:variationalrepinverseoperator}Let $A$ be a self-adjoint
operator on a Hilbert space $\mathcal{H}$, satisfying $\left\langle f,Af\right\rangle \ge0$
for all $f\in\mathcal{H}$ and such that the inverse $A^{-1}$ exists.
Then 
\[
\left\langle f,A^{-1}f\right\rangle =\sup_{g\in\mathcal{H}}\left[2\left\langle f,g\right\rangle -\left\langle g,Ag\right\rangle \right]\;,
\]
 where the supremum is attained with $g=A^{-1}f$. 
\end{lem}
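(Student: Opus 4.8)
The plan is to prove the identity by a completing-the-square argument, which simultaneously identifies the maximiser. Introduce the functional $\Phi(g)=2\langle f,g\rangle-\langle g,Ag\rangle$ on $\mathcal{H}$; the claim to be established is that $\Phi$ attains its supremum at $\hat g=A^{-1}f$ and that $\Phi(\hat g)=\langle f,A^{-1}f\rangle$. The single computation that does all the work is to show that for every $g\in\mathcal{H}$,
\[
\langle f,A^{-1}f\rangle-\Phi(g)=\langle \hat g-g,\,A(\hat g-g)\rangle\;.
\]

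To obtain this, first note that $\hat g=A^{-1}f$ is a well-defined element of $\mathcal{H}$ by the hypothesis that $A^{-1}$ exists, and that $f=A\hat g$. Substituting $f=A\hat g$ into $\langle f,A^{-1}f\rangle-2\langle f,g\rangle+\langle g,Ag\rangle$ yields $\langle A\hat g,\hat g\rangle-2\langle A\hat g,g\rangle+\langle g,Ag\rangle$. Using that $A$ is self-adjoint — so that $\langle A\hat g,g\rangle=\langle\hat g,Ag\rangle=\langle g,A\hat g\rangle$ and $\langle A\hat g,\hat g\rangle=\langle\hat g,A\hat g\rangle$ — this rearranges to $\langle\hat g,A\hat g\rangle-\langle\hat g,Ag\rangle-\langle g,A\hat g\rangle+\langle g,Ag\rangle=\langle\hat g-g,A(\hat g-g)\rangle$, which is the displayed identity.

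With the identity in hand the conclusion is immediate: the positivity hypothesis $\langle h,Ah\rangle\ge0$ for all $h\in\mathcal{H}$, applied with $h=\hat g-g$, gives $\langle f,A^{-1}f\rangle-\Phi(g)\ge0$, i.e. $\Phi(g)\le\langle f,A^{-1}f\rangle$ for every $g$; and taking $g=\hat g=A^{-1}f$ makes the right-hand side of the identity vanish, so the bound is attained there. Hence $\sup_{g\in\mathcal{H}}\Phi(g)=\langle f,A^{-1}f\rangle$, with the supremum attained at $g=A^{-1}f$. There is essentially no obstacle beyond bookkeeping here: the only points requiring a little care are the repeated use of self-adjointness (and the symmetry of the real inner product on the $L^{2}$ spaces in play) to fold the cross terms into a single quadratic form, and the observation that $A^{-1}f\in\mathcal{H}$ so that the substitution $f=A\hat g$ is legitimate.
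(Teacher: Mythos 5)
Your completing-the-square argument is correct: the identity $\langle f,A^{-1}f\rangle-\bigl(2\langle f,g\rangle-\langle g,Ag\rangle\bigr)=\langle \hat g-g,A(\hat g-g)\rangle$ with $\hat g=A^{-1}f$ immediately yields both the upper bound and its attainment, and your use of self-adjointness and the positivity of the quadratic form is exactly what is needed. The paper itself states this lemma without proof (attributing it to Bellman and citing its use by Caracciolo, Pelissetto and Sokal), and what you have written is the standard derivation that fills that gap.
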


\appendix

\bibliographystyle{plain}
\bibliography{refs}

\end{document}